\newtheorem{theorem}{Theorem}
\newtheorem{assumption}[theorem]{Assumption}
\newtheorem{definition}[theorem]{Definition}
\newtheorem{lemma}[theorem]{Lemma}
\newtheorem{proposition}[theorem]{Proposition}
\newtheorem{remark}[theorem]{Remark}
\newenvironment{proof}[1][Proof]{\textbf{#1.} }{\ \rule{0.5em}{0.5em}}
\DeclareMathOperator{\sgn}{sgn}
\newcommand{\tcb}{\textcolor{blue}}
\newcommand{\mc}{\mathcal}
\def\ps@pprintTitle{%
   \let\@oddhead\@empty
   \let\@evenhead\@empty
   \let\@oddfoot\@empty
   \let\@evenfoot\@oddfoot
}
\colorlet{color1}{violet}
\colorlet{color2}{blue}
\colorlet{color_0}{color1!43.08249312026406!color2}
\colorlet{color_1}{color1!85.18349733898451!color2}
\colorlet{color_2}{color1!90.42204698692741!color2}
\colorlet{color_3}{color1!118.81321445593733!color2}
\colorlet{color_4}{color1!83.73329350241218!color2}
\colorlet{color_5}{color1!18.39962818297274!color2}
\colorlet{color_6}{color1!7.530786349233077!color2}
\colorlet{color_7}{color1!43.24842019105645!color2}
\colorlet{color_8}{color1!128.49214263497893!color2}
\colorlet{color_9}{color1!64.74311022346635!color2}
\colorlet{color_10}{color1!26.141419750395038!color2}
\colorlet{color_11}{color1!63.67653239513009!color2}
\colorlet{color_12}{color1!73.67580916883811!color2}
\tikzset{
    subnodesIEEE/.pic={
       
\draw [name path= line1, line width=1mm, color_0 ] (1.75,-8.5)-- (3.75,-8.5)  node[below ,midway] (anchor1) {} 
node[above ,left=0.8cm] (anchor1v2) {}
node[above ,left=0.4cm] (anchor1v5) {}
node[above ,left] (anchor1v7) {}
node[above ,right=0.2cm]  {\textcolor{black}{0}};

\draw [name path= line2, line width=1mm, color_1 ] (8.1,-15.2) node[below , right=0.1cm] (anchor2) {}-- (11.7,-15.2) 
    node[above ,right=0.2cm]  {\textcolor{black}{1}}
   node[below ,left=1cm] (anchor2v1) {} 
      node[below ,left=0.2cm] (anchor2v3) {} 
            node[below ,midway] (anchor2v6) {} 
            node[below ,left=0.4cm] (anchor2v4) {}  ;

\draw [name path= line3, line width=1mm, color_2 ] (13.9,-15.2) node[below] (anchor3) {} -- (16.75,-15.2) node[above ,right=0.2cm]  {\textcolor{black}{2}}
node[below ,left=1cm] (anchor3v2) {} 
node[below ,midway] (anchor3v4) {};

\draw [name path= line4, line width=1mm, color_3 ] (14.1,-11.5)-- (16.4,-11.5) 
node[above ,right=0.2cm]  {\textcolor{black}{3}}
node[above ,left=0.7cm] (anchor4v2) {} 
node[above ,left=0.5cm] (anchor4v3) {}
node[above ,left=0.9cm] (anchor4v5) {}
node[above ,left=0.25cm] (anchor4v7) {}
node[above, left=0.2cm] (anchor4) {};

\draw [name path= line5, line width=1mm, color_4 ] (8.8,-11.5)-- (10.8,-11.5)
node[above ,right=0.2cm]  {\textcolor{black}{4}} 
node[above ,left=0.7cm] (anchor5v1) {}
node[above ,left=0.4cm] (anchor5v2) {}
node[above ,left=0.2cm] (anchor5v4) {};

\draw [name path= line6, line width=1mm, color_5 ] (8.8,-6.4)-- (10.8,-6.4)
node[above ,right=0.2cm]  {\textcolor{black}{5}}
node [above, left=0.15cm] (anchor6) {}
node[above, left=0.9cm] (anchor6v1) {} 
node[above, left=0.4cm] (anchor6v2) {}
node[above, left=0.3cm] (anchor6v11) {}
node[above, left=0.6cm] (anchor6v13) {};

\draw [name path= line7, line width=1mm, black ] (14.8,-8.2)-- (16.2,-8.2) 
node[above ,right=0.2cm]  {\textcolor{black}{6}}
node[above, left=0.2cm] (anchor7v4) {} 
node[above, left=0.5cm] (anchor7v1) {}
node[above, left=0.1cm] (anchor7v8) {};

\draw [name path= line8, line width=1mm, color_7 ] (17.4,-7.8)-- (17.4,-6.3) 
node[above =0.2cm,]  {\textcolor{black}{7}}
node[above, midway] (anchor8v7) {}
node[above, left] (anchor8) {} ;

\draw [name path= line9, line width=1mm, color_8 ] (14.1,-6)-- (16.1,-6) 
node[above ,right=0.2cm]  {\textcolor{black}{8}}
node[above, left=0.6cm] (anchor9){}
node[above, left=0.8cm] (anchor9v4) {}
node[above, left=0.5cm] (anchor9v7) {}
node[above, left=0.2cm] (anchor9v14) {}  ;

\draw [name path= line10, line width=1mm, color_9 ] (13.3,-3.9)-- (14.7,-3.9)
node[above ,right=0.0cm]  {\textcolor{black}{9}}
node[above, left=0.6cm] (anchor10){}
node[above, left=0.2cm] (anchor10v9){}
node[above, midway] (anchor10v11) {};

\draw [name path= line11, line width=1mm, color_10 ] (9.9,-3.9)-- (11.2,-3.9) 
node[above ,right=0.2cm]  {\textcolor{black}{10}}
node[above, left= 0.2cm] (anchor11v10) {}

;
\draw [name path= line12, line width=1mm, color_11 ] (6.3,-3.9)-- (7.6,-3.9) 
node[above ,right=0.2cm]  {\textcolor{black}{11}}
node [above, midway] (anchor12) {} 
node [above, left=0.2cm] (anchor12v13) {}  ;

\draw [name path= line13, line width=1mm, color_12 ] (8.6,-2.5)-- (10.6,-2.5)
node[above ,right=0.2cm]  {\textcolor{black}{12}}
node[above, midway] (anchor13){}
node[above, left=0.4cm] (anchor13v6) {} 
node[above, left=0.2cm] (anchor13v14) {} 
node[above, left=0.8cm] (anchor13v12) {} ;

\draw [name path= line14, line width=1mm, color_6 ] (13.9,-2.5)-- (15.9,-2.5)
node[above ,right=0.2cm]  {\textcolor{black}{13}}
node[above, midway] (anchor14){}
node[above, left=0.2cm] (anchor14v9) {}
node[above, left=0.7cm] (anchor14v13) {} ;

%%%% Nodes
\def\nd{0.3}% node distance to line
\def \idn{0.15} % inter node distance
\node[draw,circle,fill=none,font=\small,line width=0.5pt, above= \nd cm of anchor1, node distance= 0.2cm] (G0) {G};

\draw (G0.south)-| (anchor1) ;

\node[draw,circle,fill=none,line width=0.5pt, below= \nd cm of anchor2] (c2) {C};

\draw (c2)-| (anchor2) ;
%\node[draw,circle,fill=green,line width=0.5pt, right= \idn cm of c2] (g2) {G};
%\draw (g2)--++(0,\nd cm) |- (anchor2v1) ;

%\node[draw,circle,fill=none,line width=0.5pt, right= \idn cm of g2] (g3) {G};
%\draw (g3)--++(0,\nd cm) |- (anchor2v1) ;
%\node [ below=0.5cm of g3,rotate=90] (tg4) {\tiny solar}; 

%\node[draw,circle,fill=green,line width=0.5pt, right= \idn cm of g2] (g4) {G};
%\draw (g4)--++(0,\nd cm) |- (anchor2v1) ;

\node[draw,circle,fill=none,line width=0.5pt, below= \nd cm of anchor3] (c3) {C};
\draw (c3)-| (anchor3) ;
\node[draw,circle,fill=none,line width=0.5pt, right= \idn cm of c3] (g6) {G};
\draw (g6)--++(0,\nd cm) |- (anchor3v4) ;

\node[draw,circle,fill=none,line width=0.5pt, right= \idn cm of g6] (g7) {G};
\draw (g7)--++(0,\nd cm) |- (anchor3v4) ;

\node[draw,circle,fill=none,line width=0.5pt, below= \nd cm of anchor4] (c4) {C};
\draw (c4)--++(0,\nd cm) |- (anchor4) ;

\node[draw,circle,fill=none,line width=0.5pt, above= \nd cm of anchor5v1] (c5) {C};
\draw (c5.south)--++(0,\nd cm) |- (anchor5v1) ;

%%% community 2
\node[draw,circle,fill=none,line width=0.5pt, right= \nd cm of anchor8] (g10) {G};
\draw (g10)--   (anchor8) ;

\node[draw,circle,fill=none,line width=0.5pt, below= \nd cm of g10] (g11) {C};

\draw (g11)-| (anchor8) ;

\node[draw,circle,fill=none,line width=0.5pt, above = \nd cm of anchor9] (c9) {C};
\draw (c9)--   (anchor9) ;
\node[draw,circle,fill=none,line width=0.5pt, above = \nd cm of anchor14] (c14) {C};
\draw (c14)--   (anchor14) ;
\node[draw,circle,fill=none,line width=0.5pt, below = \nd cm of anchor10] (c10) {C};
\draw (c10)--   (anchor10) ;

%%community 3
\node[draw,circle,fill=none,line width=0.5pt, below= \nd cm of anchor6] (g15) {G};

\draw (g15)-| (anchor6) ;
\node[draw,circle,fill=none,line width=0.5pt, left= \idn cm of g15] (c6) {C};
\draw (c6)-- ++ (0, 1) |- (anchor6) ;
\node[draw,circle,fill=none,line width=0.5pt, below= \nd cm of anchor11v10] (c11) {C};
\draw (c11)-| (anchor11v10) ;
\node[draw,circle,fill=none,line width=0.5pt, above= \nd cm of anchor12] (c12) {C};
\draw (c12.south)-| (anchor12) ;
\node[draw,circle,fill=none,line width=0.5pt, above= \nd cm of anchor13] (c13) {C};
\draw (c13.south)-| (anchor13) ;
    }}
\begin{document}

\begin{frontmatter}
    \title{Privacy Impact on Generalized Nash Equilibrium in Peer-to-Peer Electricity Market}

    \author[1,2]{Ilia Shilov \fnref{fn1}}
    \fntext[fn1]{Corresponding author}
    \ead{ilia.shilov@inria.fr}
    \author[2]{ H\'el\`ene Le Cadre}
    \author[1]{Ana Busic}
    
    \address[1]{Inria Paris, DI ENS, CNRS, PSL University, France}
    \address[2]{VITO/EnergyVille, Thorpark 8310, Genk, Belgium}
    %\address[3]{Inria Paris, DI ENS, CNRS, PSL University}
    \begin{comment}
    \end{comment}
    \begin{abstract}
        We consider a peer-to-peer electricity market, where agents hold private information that they might not want to share. The problem is modeled as a noncooperative communication game, which takes the form of a Generalized Nash Equilibrium Problem, where the agents determine their randomized reports to share with the other market players, while anticipating the form of the peer-to-peer market equilibrium. In the noncooperative game, each agent decides on the deterministic and random parts of the report, such that (a) the distance between the deterministic part of the report and the truthful private information is bounded and (b) the expectation of the privacy loss random variable is bounded. This allows each agent to change her privacy level. We characterize the equilibrium of the game, prove the uniqueness of the Variational Equilibria and provide a closed form expression of the privacy price. In addition, we provide a closed form expression to measure the impact of the privacy preservation caused by inclusion of random noise and deterministic deviation from agents’ true values. Numerical illustrations are presented on the 14-bus IEEE network.
    \end{abstract}
    
    \begin{keyword}
    Peer-to-peer market, communication game, generalized Nash equilibrium, variational equilibrium, privacy
    \end{keyword}

\end{frontmatter}

\section{Introduction}

The large-scale integration of  Distributed  Energy  Resources (DERs), the increasing share of Renewable Energy Source (RES) - based generators in the energy mix and the more proactive role of prosumers, have led to the evolution of electricity markets from centralized pool-based organizations to decentralized peer-to-peer market designs \cite{tushar}. Within this peer-to-peer electricity market, agents negotiate their energy procurement seeking to minimize their costs with respect to both individual and coupling constraints, while preserving a certain level of privacy \cite{lecadre}. The problem is modeled as a generalized Nash equilibrium problem (GNEP), parametrized in the privacy level, chosen by the agents.

Information sharing in the peer-to-peer market can improve agents' performance, but also may violate their privacy, leading to the disclosure of agent's private information \cite{xie}. This calls for the design of new communication mechanisms that capture the agents' ability to define the information they want to share (their report) with the other market participants, while preserving their privacy \cite{fioretto}. In many applications, this problem is usually addressed by including noise to the reports that the agents subsequently use to compute the market equilibrium \cite{lecadre}. However, this approach does not include the ability of the agents to act strategically on the values of their report.
%, using the data as the input of the optimization problem \cite{fioretto}, and deliberately deviating from their true values. 
Moreover, the question of the optimal noise distribution is crucial in such a framework \cite{murguia}. 

To analyse the market in presence of shared coupling constraints, we employ Generalized Nash Equilibrium (GNE) as solution concept \cite{kulkarni}, and a refinement of it, called Variational Equilibria (VE), assuming the shadow variables associated with the shared coupling constraints are aligned among the agents. In our proposed framework, agents compute GNE with respect to the constraints that bound (a) the distance between the deterministic deviation from the true values of the private information and (b) the Kullback-Leibler divergence, that measures the effect of the additive random noise included in the reports.

Game theoretic approaches integrating the prosumers' strategic behaviors in the peer-to-peer trading are considered in \cite{lecadre}, \cite{belgioioso2}. The economic dispatch in energy communities under different structures of communications is analysed 
%using consensus based approaches 
in \cite{moret_pinson}, \cite{moret}.  The impact of privacy on an energy community was analyzed in the literature, e.g. in \cite{lecadre}, where 
%the authors provide a closed form upper bound on the utility loss caused by the noise in the sensitive information reports. 
the sensitive information and the noise added to agents' reports were considered as exogenous parameters. 
%Considering sequential multi-energy market clearings, 
Using a prediction model, Fioretto et al. provide a privacy-preserving mechanism, to protect the information exchanged between the different market operators while guaranteeing their coordination \cite{fioretto}. 
 
The anticipation of the actions of the agents in our model is represented by the common knowledge of the {\it form of the solution}. This anticipation will be used in strategic behavior framework to compute the prosumers' optimal deviation in their private information reports.

Various definitions of privacy have been introduced in the data science literature \cite{goncalves}. Several information metrics: e.g., mutual information, entropy, Kullback-Leibler divergence, and Fisher information are used to quantify information release \cite{farokhi}, \cite{murguia}. Differential privacy (DP) was recently successfully applied to multi-energy market operations \cite{fioretto} and dynamical systems \cite{murguia}. DP relies on adding noise to the reports from predetermined distributions. In our model we also use the additive noise, but we relax the assumptions of DP mechanism and focus on the prosumers' ability to determine their noise distribution. It is done by bounding the the expectation of the {\it privacy loss random variable} \cite{balle}, which constitutes exactly the Kullback-Leibler divergence for the introduced privacy-preserving randomized mechanism. 
%State-of-the-art privacy-preserving techniques can be broadly divided into data transformation methods, secure computation protocols, and decompo\-sition-based methods (see \cite{goncalves} for a survey). Privacy impact on collaborative forecasting and learning was addressed in \cite{goncalves,pejo}. Closely related, Dvorkin et al. analyse the impact of asymmetric forecasts on the competitive electricity market equilibrium \cite{dvorkin}.

To analyse the market in presence of shared coupling constraints, we employ Generalized Nash Equilibrium (GNE) as solution concept \cite{ kulkarni}, and a refinement of it, called Variational Equilibria (VE) \cite{kulkarni, rosen}. We focus on a certain properties of the game, such as {\it aggregative} and {\it potential} structure. Different algorithms using such properties as strong/strict-monotonicity of the game operator for computing VE using decentralized or semi-decentralized structure for multi-agent equilibrium problems in generalized aggregative games has recently gained high research interest \cite{belgioioso}, \cite{paccagnan}.
%, \cite{parise}, \cite{paccagnan_quadratic}. 

\subsection{Contributions}

We relate the notion of privacy preservation resulting from the non-disclosure of the nominal demands and RES-based generations of the prosumers in \cite{lecadre}, to the privacy mechanism with the additive Gaussian noise, that allows each agent to control her privacy level. It is done, firstly, by choosing the deterministic value to report to other agents; secondly, by using the random noise added to that value. We quantify the impact of privacy on the prosumers' costs and provide an analytical expression of the market equilibria. In addition, we allow each agent to change her level of privacy and show the existence of the incentives for the prosumers to deviate from their true sensistive parameter values. We rely on the notion of strong monotonicity to prove the existence and uniqueness of the solution to our problem. Using Kullback-Leibler divergence, we measure the cost of privacy, caused by inclusion of the random noise. All the theoretical results are illustrated on the 14-bus IEEE network.

The organization of the rest of this paper is as follows: in Section \ref{sec: statement} we first describe the peer-to-peer electricity trading problem in Subsections \ref{subsec: preliminaries} and \ref{subsec: electricity trading problem}, which constitutes a basis for our communication game, that will be defined in Subsection \ref{subsec: communication game}. In Section \ref{sec: equilibrium problems} we provide the analytical expression of the GNE, prove the uniqueness of the VE of the game and provide an expression for the utility gap, caused by the introduction of the privacy. Theoretical results are illustrated on the 14-bus IEEE network in Section \ref{sec: numerical}.

\section{Statement of the problem}\label{sec: statement}

\subsection{Preliminaries}\label{subsec: preliminaries}
In distributed control systems there is a usual trade-off between privacy and cost: to obtain a better solution, each agent relies on the information of the other agents in the system, which they might not have incentives to provide.

Consider a single-settlement market for peer-to-peer electricity trading made of a set $\mc{N}$ of $N$ agents, each one of them being located in a node of a communication network, that is modeled as a graph $\mc{G}:=(\mc{N},E)$ where $E\subseteq \mc{N}\times\mc{N}$ is the set of communication links between the players. Let $\Omega_n$ be the set of nodes, player $n$ wants to trade electricity with. Being the interface node between the local electricity market and at the distribution level and the transmission power network, node 0 can communicate with any other nodes in $\Omega_0 := \mathcal{N} \setminus 0$. The graph $\mc{G}$ does not necessarily reflect the distribution power network constraints.

In this paper we focus on the privacy issues that arise after solving the peer-to-peer electricity trading problem, considered in \cite{lecadre}. 

Each agent $n$ chooses independently her bilateral trades $\boldsymbol{q}_n$ with agents she wants to trade electricity with, self-generation $G_n$ and flexible demand $D_n$, in order to minimize her cost function $\Pi_n$:
\begin{equation}\label{eq: original cost function}
    \begin{aligned}
        &\Pi_n(D_n,G_n,\boldsymbol{q}_n) := \underbrace{1/2 \cdot a_n G_n^2 + b_n G_n +d_n}_{C_n(G_n)} + \\
        &+\underbrace{\tilde{a}_n(D_n - D^{\ast}_n )^2 - \tilde{b}_n}_{U_n(D_n)} + \underbrace{\sum_{m \in \Omega_n,m\neq n} c_{nm} q_{mn}}_{\tilde{C}_n(\bm{q})},
    \end{aligned}
\end{equation}
where $a_n, b_n, d_n, \tilde{a}_n, \tilde{b}_n > 0$ and $D^{\ast}_n$ denotes the \textit{nominal demand} of agent $n$ \cite{lecadre}. Thus, the vector of agent $n$'s decision variables is $(D_n, G_n, \bm{q}_n)$, where $\bm{q}_n := (q_{mn})_{m \in \Omega_n}$ is the vector of the quantities exchanged between $n$ and $m$ in the direction from $m$ to $n$, $q_{mn}$, for all $m \in \Omega_n \setminus \{n \}$. We use the following convention: if $q_{mn} \geq 0$, then $n$ buys $q_{mn}$ from $m$, otherwise ($q_{mn} < 0$) $n$ sells $-q_{mn}$ to $m$. We let $Q_n$ denote the {\it net import} of agent $n$: $Q_n := \sum_{m \in \Omega_n} q_{mn}$.

Each agent computes trading cost $\tilde{C}_n(\bm{q})$ using $c_{nm}$ which might represent preferences measured through product differentiation prices \cite{lecadre}, \cite{sorin} on the possible trades with the neighbors, or taxes. The following condition on agent's trades called {\it trading reciprocity constraint} couples the decisions of two neighboring agents, ensuring for every node $m \in \Omega_n$ that $q_{mn} + q_{nm} = 0$.
Note, that this formulation of the coupling constraints differs from the one presented in \cite{lecadre}, as we use equality constraint in our model, instead of the inequality. That means that the energy surplus is not allowed in the electricity trading model. Let $\kappa_{nm} \in [0, +\infty )$ be the equivalent trading capacity between node $n$ and node $m$, such that $\kappa_{nm} = \kappa_{mn}$ and $\forall m \in \Omega_n$. This equivalent trading capacity is used to bound the trading flows such that $q_{mn} \leq \kappa_{mn}$.

Local supply and demand should satisfy the following balance equality in each node $n$ in $\mathcal{N}$: $D_n = G_n + \Delta G_n + \sum_{m \in \Omega_n}  q_{mn}$, where $\Delta G_n$ is the \textit{renewable energy sources (RES)-based generation} at node $n$, assumed to be non-flexible.

\subsection{Electricity trading problem}\label{subsec: electricity trading problem}
As it was discussed in the introduction, each agent holds some private information that takes the form of nominal demand $D^{\ast}_n$ and RES-based generation $\Delta G_n$, which she does not desire to reveal to the other agents in the system. In the further analysis, we assume $y_n := D^{\ast}_n - \Delta G_n$ to be the private information of agent $n$. We assume that the agents desire to solve the electricity trading problem endowed with the set of coupled constraints while not allowing the other agents to infer their values of $y_n$. We denote $\bm{x}_n := (D_n, G_n, \bm{q}_n)$ to be the vector that contains agent $n$'s decision variables and $\bm{x_{-n}}$ is the vector of the other agents' actions. We recall the optimization problem formulated in \cite{lecadre} for the clearing of the peer-to-peer electricity market.

\subsubsection{Peer-to-peer market design}
In the peer-to-peer setting the problem of the electricity trading takes the form of generalized Nash equilibrium problem i.e., a game where the feasible sets of the players depend on the other players’ actions. With the notation introduced above, it means that each agent solves the following optimization problem:
\begin{subequations}\label{original problem p2p}
\begin{align}
    \min_{\bm{x}_n} \hspace{1cm} &   \Pi_n (\bm{x}_n),\\
    s.t. \hspace{1cm} & \underline{G}_n \leq G_n \leq \overline{G}_n  &(\tcb{\underline{\mu}_n, \overline{\mu}_n}) \label{eq: original G bounds}\\
    & \underline{D}_n \leq D_n \leq \overline{D}_n  &(\tcb{\underline{\nu}_n, \overline{\nu}_n}) \label{eq: original D bounds}\\
    &q_{mn} + q_{nm} = 0 &(\tcb{\zeta_{nm}}) \label{eq: reciprocity constraint}\\
    &q_{mn} \leq \kappa_{mn} &(\tcb{\xi_{nm}})\\
    &D_n = G_n + \Delta G_n + \sum_{m \in \Omega_n}  q_{mn}  &(\tcb{\lambda_n}) \label{eq: original balance},
\end{align}
\end{subequations}
where the corresponding dual variables are placed in blue at the right of each constraint. Note that in \eqref{original problem p2p} the feasible set of the agent $n$ can be rewritten in a more compact form $\mathcal{C}_n(\bm{x_{-n}}) = \{ \bm{x}_n |\eqref{eq: original G bounds} - \eqref{eq: original balance} \text{ hold} \}$. This notation will be used later in the paper.

We introduce the following assumption to guarantee that the interface trading capacities are big enough to supply trading needs of all the agents and that differentiation prices are symmetric for trading with the root node.
\begin{assumption}\label{assumption: root node}
    We assume that there are large trading capacities from and to node 0 -- that is $\xi_{0n} = \xi_{n0} = 0\,\, \forall n \in \mathcal{N}$ and $c_{n0}=c_{0n}$ for all $n \in \mathcal{N}$. 
\end{assumption}

The following subsection describes the computation of $\tilde{C}_n(\bm{q})$ in the different setting for the differentiation prices.

\subsection{Computation of the trading cost}

Under the conditions of Assumption \ref{assumption: root node}, Proposition 8 in \cite{lecadre} states that:
    \begin{proposition}\label{proposition: trade saturation lecadre}
    For any couple of nodes $n \in N, m \in \Omega_n, m \not= n$ with asymmetric preferences (such as $c_{mn} > c_{nm}$ or $c_{mn} < c_{nm}$) imply that the node with the smaller preference for the other saturates the line.
    \end{proposition}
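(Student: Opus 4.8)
The plan is to work from the Karush--Kuhn--Tucker (KKT) conditions of the coupled problem \eqref{original problem p2p}, exploiting that at a Variational Equilibrium the multipliers of the shared reciprocity constraint \eqref{eq: reciprocity constraint} are aligned across the two agents that share it. First I would write the stationarity condition of agent $n$ with respect to its import $q_{mn}$ and of agent $m$ with respect to $q_{nm}$. Collecting the contributions of the trading cost $\tilde{C}_n$, the reciprocity constraint, the capacity constraint and the balance constraint \eqref{eq: original balance} (whose derivative with respect to the trade carries a minus sign), these read
\begin{align}
    c_{nm} + \zeta_{nm} + \xi_{nm} - \lambda_n &= 0, \label{eq:plan-stat-n}\\
    c_{mn} + \zeta_{mn} + \xi_{mn} - \lambda_m &= 0, \label{eq:plan-stat-m}
\end{align}
together with the complementarity and sign conditions $\xi_{nm}(q_{mn}-\kappa_{mn})=0$, $\xi_{mn}(q_{nm}-\kappa_{nm})=0$, $\xi_{nm},\xi_{mn}\ge 0$.

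The next step is to eliminate the nodal prices $\lambda_n,\lambda_m$ using Assumption \ref{assumption: root node}. Since node $0$ trades with every agent through an uncapacitated, price-symmetric link, applying \eqref{eq:plan-stat-n}--\eqref{eq:plan-stat-m} to the pair $(n,0)$ gives $\lambda_n = c_{n0}+\zeta_{n0}$ and, from the root's side, $\lambda_0 = c_{0n}+\zeta_{0n}$. The equalities $\xi_{n0}=\xi_{0n}=0$ and $c_{n0}=c_{0n}$ of Assumption \ref{assumption: root node}, combined with the VE alignment $\zeta_{n0}=\zeta_{0n}$, then yield $\lambda_n=\lambda_0$ for every agent, and in particular $\lambda_n=\lambda_m$. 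Subtracting \eqref{eq:plan-stat-m} from \eqref{eq:plan-stat-n} and using the VE alignment $\zeta_{nm}=\zeta_{mn}$ collapses the system to the scalar identity
\begin{equation}\label{eq:plan-key}
    \xi_{mn}-\xi_{nm} = c_{nm}-c_{mn}.
\end{equation}

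From \eqref{eq:plan-key} the conclusion is immediate. Under asymmetric preferences, say $c_{nm}>c_{mn}$ (so node $m$ has the smaller preference $c_{mn}$ for node $n$), the right-hand side is strictly positive, forcing $\xi_{mn}>\xi_{nm}\ge 0$; complementary slackness then gives $q_{nm}=\kappa_{nm}$, i.e.\ node $m$ saturates its line. The symmetric argument with $c_{mn}>c_{nm}$ saturates node $n$'s line, so in either case the node with the smaller preference saturates. I would also note that both lines cannot saturate at once: reciprocity $q_{nm}=-q_{mn}$ makes $q_{mn}=\kappa_{mn}$ and $q_{nm}=\kappa_{nm}$ incompatible when $\kappa_{mn}=\kappa_{nm}>0$, so exactly one multiplier in \eqref{eq:plan-key} is strictly positive.

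The main obstacle I anticipate is justifying the two ingredients that make \eqref{eq:plan-key} clean: the VE alignment of the reciprocity multipliers ($\zeta_{nm}=\zeta_{mn}$ and $\zeta_{n0}=\zeta_{0n}$), which must be invoked from the definition of the Variational Equilibrium rather than assumed ad hoc, and the price equalization $\lambda_n=\lambda_m$, which relies crucially on Assumption \ref{assumption: root node}. Away from that assumption the term $\lambda_n-\lambda_m$ would survive on the right-hand side of \eqref{eq:plan-key} and the simple sign argument would break. The only remaining technical point, namely the constraint qualification needed to validate the KKT system, is automatic here since all constraints in \eqref{original problem p2p} are affine.
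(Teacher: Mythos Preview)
The paper does not actually prove this statement: it is quoted verbatim as ``Proposition~8 in \cite{lecadre}'' and used as an imported result, so there is no in-paper proof to compare against. That said, your KKT-based argument is the natural one and is correct. The stationarity conditions \eqref{eq:plan-stat-n}--\eqref{eq:plan-stat-m} are right, the use of Assumption~\ref{assumption: root node} together with the VE alignment $\zeta_{n0}=\zeta_{0n}$ to get $\lambda_n=\lambda_0$ for all $n$ is exactly the mechanism the paper relies on elsewhere (it is what underlies the uniform price $\lambda_0$ in \eqref{eq: lambda}), and the reduction to the scalar identity \eqref{eq:plan-key} followed by complementary slackness is clean. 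Your closing remark that both capacity constraints cannot bind simultaneously because $q_{nm}=-q_{mn}$ and $\kappa_{nm}=\kappa_{mn}>0$ is also correct and worth keeping, since it confirms that exactly one of $\xi_{nm},\xi_{mn}$ is strictly positive in the asymmetric case. In short, your proposal fills in a proof the paper outsources, and does so along the expected lines; there is no gap.
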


We focus on two opposite instances:
\paragraph{1. All $c_{nm}$ are homogeneous}
That means that $c_{nm} = c$ for all $n,m \in \mathcal{N}$. This case reflects the interpretation of $c_{nm}$ as the taxes for energy trading, that should be naturally non-discriminating among agents. In this case bilateral trade cost is given by:
\begin{equation}\label{eq: bilateral cost trivial case}
    \tilde{C}_n(\bm{q}_n) = c \cdot Q_n
\end{equation}

\paragraph{2. All $c_{nm}$ for $m,n \not= 0$ are heterogeneous}
This framework represents the case, when all $c_{nm}, m,n \not= 0$ are drawn from some continuous distribution (e.g. uniform). Under the Assumption \ref{assumption: root node} we are able to obtain the expressions for $\tilde{C}_n(\bm{q}_n)$ in this framework for agent $n$. Again using Proposition \ref{proposition: trade saturation lecadre} we have that $q_{n0} = Q_n - \sum_{m \in \Omega_n, m \not= 0} \kappa_{nm} \sgn(c_{mn}-c_{nm})$,
where $Q_n$ is obtained by combining \eqref{eq: original balance} and expressions for $D_n, G_n$. Thus, we are able to obtain the cost expressions for each agents $n$ directly:
\begin{proposition}\label{proposition: bilateral cost p2p}
    Bilateral trade costs for any agent $n \in \mathcal{N}$ in the network except root node 0 are given by
    \begin{equation}
        \begin{aligned}
            \tilde{C}_n(\bm{q}_n) &= c_{0n} \big[  Q_n - \sum_{m \in \Omega_n, m \not= 0} \kappa_{nm} \sgn(c_{mn}-c_{nm})\big]\\
            &+\sum_{k \in \Omega_n, k \not=0} c_{nk} \kappa_{nk} \sgn(c_{kn}-c_{nk}).
        \end{aligned}
    \end{equation}
    Bilateral costs for node 0 are expressed as
    \begin{equation}
        \tilde{C}_0(\bm{q}_0) = \sum_{n \in \Omega_0} c_{0n} \Big[\sum_{m \in \Omega_n, m \not= 0} \kappa_{nm} \sgn(c_{mn}-c_{nm}) -  Q_n \Big]
    \end{equation}
\end{proposition}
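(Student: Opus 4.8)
The plan is to start from the definition of the trading cost in \eqref{eq: original cost function}, namely $\tilde{C}_n(\bm{q}) = \sum_{m \in \Omega_n,\, m \neq n} c_{nm} q_{mn}$, and to eliminate all of agent $n$'s bilateral trades in closed form by exploiting the fact that, under heterogeneous preferences, every non-root line incident to $n$ is saturated. First I would isolate the single term corresponding to the trade with the root node $0$, writing $\tilde{C}_n(\bm{q}) = c_{n0} q_{0n} + \sum_{m \in \Omega_n,\, m \neq 0} c_{nm} q_{mn}$, so that the only remaining task is to express each remaining $q_{mn}$ through known quantities.

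The key step is to convert Proposition \ref{proposition: trade saturation lecadre} into the signed identity
\begin{equation*}
    q_{mn} = \kappa_{nm}\,\sgn(c_{mn}-c_{nm}), \qquad m \in \Omega_n,\ m \neq 0 .
\end{equation*}
Here one reads off the direction of saturation from the statement ``the node with the smaller preference for the other saturates the line'': when $c_{mn} > c_{nm}$ agent $n$ has the smaller preference and imports at full capacity, giving $q_{mn} = \kappa_{nm}$, whereas when $c_{mn} < c_{nm}$ agent $m$ saturates the line and the reciprocity constraint \eqref{eq: reciprocity constraint} forces $q_{mn} = -\kappa_{nm}$; the two cases are exactly captured by the sign factor. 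Assumption \ref{assumption: root node} ($\xi_{0n}=\xi_{n0}=0$) guarantees that the root line is never capacity-constrained, so the trade $q_{0n}$ with node $0$ is the unique trade of agent $n$ left undetermined by saturation.

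I would then recover $q_{0n}$ from the net-import identity $Q_n = \sum_{m \in \Omega_n} q_{mn}$, which isolates $q_{0n} = Q_n - \sum_{m \in \Omega_n,\, m \neq 0} \kappa_{nm}\sgn(c_{mn}-c_{nm})$ once the saturated values are substituted (the reciprocity sign relating $q_{0n}$ and $q_{n0}$ must be tracked carefully here). Plugging this expression together with the saturated $q_{mn}$ into the isolated cost, and using $c_{n0}=c_{0n}$ from Assumption \ref{assumption: root node}, yields the claimed formula for $\tilde{C}_n(\bm{q}_n)$. Finally, the root-node cost follows from the same ingredients: $\tilde{C}_0(\bm{q}_0) = \sum_{n \in \Omega_0} c_{0n} q_{n0}$ with $q_{n0} = -q_{0n}$ by \eqref{eq: reciprocity constraint}, so substituting the expression just derived for $q_{0n}$ reproduces the stated sum. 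I expect the main obstacle to be the careful bookkeeping of the saturation direction and the buy/sell sign conventions, i.e. justifying the single signed identity for $q_{mn}$ from the qualitative saturation statement of Proposition \ref{proposition: trade saturation lecadre}, since everything afterwards is a direct substitution.
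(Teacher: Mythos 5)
Your proposal is correct and takes essentially the same route as the paper: the paper's own (terse) argument, given in the paragraph preceding the proposition, likewise uses Proposition \ref{proposition: trade saturation lecadre} to fix every non-root trade at the signed capacity $\kappa_{nm}\sgn(c_{mn}-c_{nm})$, recovers the remaining trade with node 0 from the net-import identity, and substitutes into $\tilde{C}_n$ (with reciprocity and $c_{0n}=c_{n0}$ handling node 0). If anything, your bookkeeping is more careful than the paper's, which writes $q_{n0}$ where, under the convention $Q_n = \sum_{m \in \Omega_n} q_{mn}$, the consistent variable is $q_{0n}$ as in your derivation.
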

\begin{remark}
   We do not impose any condition on the ratio between the values of the coefficients $c_{nm}$. Choosing $c_{0n} < c_{mn}, \forall m,n \in \mathcal{N}$, we can ensure the preference for the local trades.
\end{remark}

\paragraph{3. Intermediate case}
To demonstrate the difficulties arising in the general case for computing bilateral trades, we consider the intermediate case, in which there exists \textbf{one} additional symmetric relation $c_{n'm'} = c_{m'n'}$ for $m',n' \not=0$. Thus, for this pair of nodes we have that
\begin{equation*}
    \begin{aligned}
        Q_{n'} &= q_{0 n'} + q_{m'n'} + \sum_{k\not=m' \in \Omega_{n'}} \kappa_{n'k} \sgn(c_{kn'}-c_{n'k})\\
        Q_{m'} &= q_{0 m'} + q_{n'm'} + \sum_{k\not=n' \in \Omega_{m'}} \kappa_{m'k} \sgn(c_{km'}-c_{m'k}),
    \end{aligned}
\end{equation*}
where $q_{m'n'} = -q_{n'm'}$, which gives us a system of two equations with three unknown variables $q_{0 n'}, q_{0 m'}, q_{m'n'}$. Writing the similar equation for every node $k \not= m',n',0$, we get $N-3$ equations with $N-3$ unknowns and adding the expression for $Q_0$ we obtain linear system with $N$ independent equations and $N$ unknown variables. It follows that adding even one symmetric relation leads to the system of $N$ equations with $N+1$ unknowns.

\subsubsection{On the link between electricity trading and communication game}

It is shown in \cite{lecadre}, that at the VE, agent $n$'s decision variables $\bm{x}^{\ast}_n$ depend on the dual variable $\lambda_n$, which, under the Assumption \ref{assumption: root node} is aligned across agents: $\lambda_n = \lambda_0, \forall n \in \mathcal{N}$, where $\lambda_0$ is the  \textit{uniform market clearing price}. The equilibrium expressions, provided in \cite{lecadre}, also hold for our model with equality constraint \eqref{eq: reciprocity constraint}. $\lambda_0$ depends on the private information $y_n$ of the agents. Formally, $\lambda_0$ is given by:
\begin{equation}\label{eq: lambda}
    \lambda_0 = \frac{\sum_{n} y_n + \sum_{n} \frac{b_n}{a_n} }{ \sum_n \Big( \frac{1}{2\tilde{a}_n} + \frac{1}{a_n}  \Big)}
\end{equation}
and the decision variables $D_n$ and $G_n$ are given at the equilibrium by the following expressions:
$D_n(\bm{y}) = D^{\ast}_n - \frac{1}{2\tilde{a}_n} \lambda_0 $, $G_n(\bm{y}) = -\frac{b_n}{a_n} + \frac{1}{a_n}\lambda_0$. The expression for $Q_n$ is obtained from the supply demand equality condition \eqref{eq: original balance}: $Q_n(\bm{y}) = D^{\ast}_n + \frac{b_n}{a_n} - (\frac{1}{a_n}+\frac{1}{2\tilde{a}_n})\lambda_0 - \Delta G_n$.

Thus, to solve \eqref{original problem p2p} each agent needs to compute the uniform market clearing price $\lambda_0$, which requires a knowledge of all the $(y_n)_n$ in the system. 
It leads to a question for each agent $n$ of how to determine the report of her private information, so that it has the minimal impact on her cost, while guaranteeing that the certain level of privacy is met. That is, each agent $n$ anticipates the form of the solution of the electricity trading problem at the equilibrium and determines the report $\tilde{y}_n $ of her private information, that she submits to the other agents in the system. 

In order to do so, each agent $n$ minimizes the difference between the cost of the problem with the modified values and the optimal solution of the problem \eqref{original problem p2p} with the truthful reports $\Pi_n^{\ast}$:
\begin{equation}\label{original privacy problem}
    \begin{aligned}
        \min_{\tilde{y}_n} \hspace{1cm} &   \mathbb E \Big[ \Pi_n(\tilde{y}_n, \bm{\tilde{y}}_{-n}) - \Pi^{\ast}_n \Big],\\
        s.t. \hspace{1cm} & \bm{x}^{\ast}_n(\bm{\tilde{y}}) \in  \mathcal{C}_n(\bm{x}^{\ast}_{-n}(\tilde{y})),
    \end{aligned}
\end{equation}
where the expectation is taken in order to account for both randomized and deterministic cases. Note that $\bm{x}^{\ast}_n$ depends on $\bm{\tilde{y}}$ because in the expressions for the decision variables $D_n(\cdot), G_n(\cdot)$ and $\bm{q}_n(\cdot)$ we use the reports $\bm{\tilde{y}}$ instead of the true values $\bm{y}$ as the input. Also note that $\Pi^{\ast}_n$ is a constant as it is calculated using true values of $\bm{y}$, thus it can be omitted from the objective function.

\begin{remark}\label{remark: timing}
   In \eqref{original privacy problem} we assume that the form of the electricity trading problem is known by all the agents in the system. It enables each agent to anticipate the \textbf{form of the solution} $x^{\ast}_n(\cdot),$ for all $n \in \mathcal{N}$ and thus, based on this form to decide on the optimal information $\tilde{y}_n, \forall n \in \mathcal{N}$ to report to the other agents \textbf{before} they actually obtain the solution of the electricity trading problem. Note that it differs from  \cite{fioretto}, as we take the form of the solution $\bm{x}^{\ast}$ of the GNEP as given.
\end{remark}

\subsection{Communication game}\label{subsec: communication game}
The report of the agent $n$ takes the form $\tilde{y}_n = \hat{y}_n + \varepsilon_n$. The first part of the report captures the ability of agent $n$ to act strategically on her report by determining the deterministic part $\hat{y}_n$ that solves the cost minimization problem. 
In the second part of the report, each agent implements a randomized mechanism $M(\cdot)$ by choosing the noise $\varepsilon_n$ to add to $\hat{y}_n$ in order to preserve a certain level of privacy. 

\subsubsection{Privacy loss definition}

First, we define an upper bounded distance as a symmetric {\it adjacency relation} $y_n \simeq \hat{y}_n$ for agent $n$: $ y_n \simeq \hat{y}_n \Longleftrightarrow d(y_n, \hat{y}_n) \leq \alpha_n$,
where $\alpha_n$ is chosen beforehand and reflects the amount of information agent $n$ desires to preserve \cite{chatzikokolakis}. 

\begin{definition}[Privacy loss]
    Given a randomized mechanism $M$, let $p_{M(y_n)}(z)$ denote the density of the random variable $Z = M(y_n)$. The privacy loss function of $M(\cdot)$ on a pair of $y_n \simeq \hat{y}_n$ is defined as $l_{M,y_n,\hat{y}_n}(z) = \log \left( p_{M(y_n)}(z)/p_{M(\hat{y}_n)}(z) \right)$
    The privacy loss random variable $L_{M,y_n,\hat{y}_n} := l_{M,y_n,\hat{y}_n}(Z)$ is the transformation of the output random variable $Z = M(y_n)$ by the function $l_{M,y_n,\hat{y}_n}$.
\end{definition}
We assume that each agent samples a Gaussian noise $\varepsilon_n \sim \mathcal{N}(0,\sigma^2_n)$, thus obtaining the report $\tilde{y}_n \sim \mathcal{N}(\hat{y}_n,\sigma^2_n)$.
When the Gaussian isotropic random noise is added to the deterministic value of the input, it is well-known that the privacy loss random variable is also Gaussian:
\begin{lemma}[\cite{balle}]
    The privacy loss $L_{M,y_n,\hat{y}_n}$ of a Gaussian output perturbation mechanism follows a distribution $\mathcal{N}(\eta, 2\eta)$, with $\eta = D^2/2\sigma^2$, where $D = ||y_n - \hat{y}_n||$.
\end{lemma}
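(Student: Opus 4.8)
The plan is to write both output densities explicitly and exploit the fact that the two distributions $M(y_n)$ and $M(\hat y_n)$ differ only in their mean and not in their covariance, so that the log-ratio defining the privacy loss is an \emph{affine} function of $z$; the privacy loss is then an affine image of a Gaussian vector and hence automatically Gaussian, after which only its first two moments remain to be identified.

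First I would substitute the densities of $M(y_n)\sim\mathcal{N}(y_n,\sigma^2 I)$ and $M(\hat y_n)\sim\mathcal{N}(\hat y_n,\sigma^2 I)$ (in dimension $d$) into the definition of $l_{M,y_n,\hat y_n}$. Taking the logarithm of the ratio, the normalising constants $(2\pi\sigma^2)^{-d/2}$ cancel, and so does the quadratic term $\|z\|^2/(2\sigma^2)$ common to both exponents. This cancellation, which relies crucially on the two mechanisms sharing the same variance $\sigma^2$, is the entire content of the argument; what remains is the affine function
\begin{equation*}
 l_{M,y_n,\hat y_n}(z) = \frac{1}{\sigma^2}\,\langle z,\, y_n-\hat y_n\rangle + \frac{\|\hat y_n\|^2-\|y_n\|^2}{2\sigma^2}.
\end{equation*}

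Next I would note that, since $Z=M(y_n)$ is Gaussian and $l_{M,y_n,\hat y_n}$ is affine in $z$, the transformed variable $L_{M,y_n,\hat y_n}=l_{M,y_n,\hat y_n}(Z)$ is Gaussian, so it suffices to compute its mean and variance. Writing $v:=y_n-\hat y_n$ and using $\mathbb{E}[Z]=y_n$, recombining $\langle y_n,v\rangle$ with $\|\hat y_n\|^2-\|y_n\|^2$ to complete the square gives
\begin{equation*}
 \mathbb{E}[L_{M,y_n,\hat y_n}] = \frac{1}{\sigma^2}\langle y_n,v\rangle + \frac{\|\hat y_n\|^2-\|y_n\|^2}{2\sigma^2} = \frac{\|v\|^2}{2\sigma^2}=\frac{D^2}{2\sigma^2}=\eta.
\end{equation*}
For the variance, the coordinates of $Z$ are independent with variance $\sigma^2$, so $\mathrm{Var}(\langle Z,v\rangle)=\sigma^2\|v\|^2$, whence
\begin{equation*}
 \mathrm{Var}(L_{M,y_n,\hat y_n}) = \frac{1}{\sigma^4}\,\mathrm{Var}(\langle Z,v\rangle) = \frac{\|v\|^2}{\sigma^2}=\frac{D^2}{\sigma^2}=2\eta,
\end{equation*}
which yields the claimed law $\mathcal{N}(\eta,2\eta)$.

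There is no serious obstacle here beyond bookkeeping: once one observes that equal covariances make the log-ratio affine, the result follows from the elementary fact that an affine pushforward of a Gaussian is Gaussian, together with the two moment computations. The only point requiring a little care is the mean calculation, where the sign in $\|\hat y_n\|^2-\|y_n\|^2$ must be tracked correctly to recognise the square $\|y_n-\hat y_n\|^2$; I would double-check that step, and I would also observe that the scalar case $D=|y_n-\hat y_n|$ used elsewhere in the paper is simply the one-dimensional specialisation of this computation.
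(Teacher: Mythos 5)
Your proof is correct, and it is essentially the standard derivation: the paper itself gives no proof of this lemma (it is imported verbatim from the cited reference of Balle and Wang), and the argument there is exactly the one you reconstruct — equal covariances make the log-density ratio affine in $z$, an affine image of a Gaussian is Gaussian, and the two moment computations give mean $\eta = D^2/2\sigma^2$ and variance $2\eta$. Your bookkeeping (the completion of the square in the mean, and $\mathrm{Var}(\langle Z,v\rangle)=\sigma^2\|v\|^2$ for the variance) checks out, and your closing remark is also apt: the paper uses the scalar case $D=|y_n-\hat y_n|$ with $\varepsilon_n\sim\mathcal{N}(0,\sigma_n^2)$, which is the one-dimensional specialisation of your computation.
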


\subsubsection{%Mechanism determination
A randomized mechanism for information reporting}
We aim to allow agents to be able to decide on the optimal noise added to their private information, by choosing the optimal variance $V_n$. For simplicity of notations, we denote $V_n := \sigma^2_n$.

First, each agent chooses the neighboring input $\hat{y}_n \simeq y_n$, on which she later implements $M(\cdot)$. It is reflected in the constraint \eqref{privacy: signal bounds}.
In the constraint \eqref{privacy: expectation bound}, the expectation of the privacy loss random variable measures the expected privacy loss of the mechanism $M(y_n)$ on the fixed private information $y_n,\hat{y}_n$. In other words, it shows, how much information can be extracted from the report $\tilde{y}_n$.  Note, that it is exactly the Kullback-Leibler divergence (or the relative entropy) between $M$'s output distributions on $y_n$ and $\hat{y}_n$.

Thus, to decide on the optimal value of the report $\tilde{y}_n$, each agent needs to solve the following optimization problem:
\begin{subequations}\label{privacy problem}
    \begin{align} 
        \min_{\hat{y}_n, V_n} \quad & \mathbb{E}_{\varepsilon_n \sim \mathcal{N}(0,V_n)}  \Big[ \Pi_n(\bm{\hat{y}}, \bm{\varepsilon}) \Big] \label{privacy: objective}\\ 
        s.t. \quad & \underline{G}'_n \leq \mathbb{E}_{\varepsilon_n \sim \mathcal{N}(0,V_n)}\big[ G_n(\bm{\tilde{y}}) \big] \leq \overline{G}'_n 
        &(\tcb{\underline{\mu}_n, \overline{\mu}_n})\label{privacy: G bounds}\\
        & \underline{D}'_n \leq \mathbb{E}_{\varepsilon_n \sim \mathcal{N}(0,V_n)} \big[ D_n(\bm{\tilde{y}}) \big] \leq \overline{D}'_n  &(\tcb{\underline{\nu}_n, \overline{\nu}_n})\label{privacy: D bounds}\\
        & (\hat{y}_n - y_n )^2 \leq \alpha^2_n & (\tcb{\underline{\gamma}_n}, \tcb{\overline{\gamma}_n})\label{privacy: signal bounds}\\
        & \mathbb{E} \big[ L_{M,y_n,\hat{y}_n} \big] \leq A_n &(\tcb{\underline{\beta}_n}, \tcb{\overline{\beta}_n}) \label{privacy: expectation bound}
    \end{align}
\end{subequations}
where $\underline{G}'_n = \underline{G}_n + \omega_{G_n}$, $\overline{G}'_n = \overline{G}_n - \omega_{G_n}$ and $\underline{D}'_n = \underline{D}_n + \omega_{D_n}$, $\overline{D}'_n = \overline{D}_n - \omega_{D_n}$, in which $\omega_{D_n}, \omega_{G_n} >0$ are introduced in order to account for the strictly feasible solutions of problem \eqref{original problem p2p}. In the numerical experiments we set $\omega_{D_n}, \omega_{G_n}$ to be a small, e.g. $10^{-3}$.

As it is shown below, the only term depending on the variance in the utility function of the agent $n$ is $\frac{B_n}{B}\sum_m V_m$. In the special case $y_n = \hat{y}_n$ for some $n$, where the constraint \eqref{privacy: expectation bound} $\frac{(\hat{y}_n - y_n)^2}{2V_n}\leq A_n$ holds for any $0< V_n < \infty$. The possible convention could be to exclude this constraint from the consideration, when $y_n = \hat{y}_n$ and set $V_n = 0$. 

The condition for the uniform market clearing price $\lambda_0$ to have a form given in \eqref{eq: lambda} is to have  zero total net import, i.e. $\sum_n Q_n = 0$. In the case a {\it fully coordinated mechanism} is implemented, i.e the local MO has an access to all the constraints and parameters of the agents and solves the problem in a centralized way, it is possible to oblige agents to align their reports $\tilde{\bm{y}}$ such that $\mathbb{E}\Big[ \sum_n Q_n(\bm{\tilde{y}}) \Big] = \sum_n(y_n - \hat{y}_n) = 0$. It follows that  $\sum_n \hat{y}_n=  \sum_n y_n$. So, when we compute $\lambda_0$ using $\tilde{\bm{y}}$ instead of $\bm{y}$, we obtain $\mathbb{E}\big[ \lambda_0(\tilde{\bm{y}}) \big] = \frac{1}{B} (\sum_{n} \hat{y}_n + \sum_{n} \frac{b_n}{a_n}) = \frac{1}{B}(\sum_n y_n  + \sum_n \frac{b_n}{a_n})$
Thus, the final market clearing price does not depend on the reports of the agents, which is formalized in the following statement:
\begin{proposition}\label{proposition: zero total net import}
    When the prosumers align their reports $\tilde{\bm{y}}$ so that the condition $\mathbb{E}\big[ \sum_n Q_n(\tilde{\bm{y}}_n )\big] = 0$ is met, then the uniform market clearing price $\lambda_0$ depends only on the true values of their initial parameters $\bm{y}$.
\end{proposition}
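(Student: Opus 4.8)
The plan is to carry out the explicit computation already indicated before the statement, being careful about which quantities are evaluated at the reports $\tilde{\bm{y}}$ and which remain at the true values $\bm{y}$. Write $B := \sum_n \big( \frac{1}{2\tilde{a}_n} + \frac{1}{a_n} \big)$ for the denominator appearing in \eqref{eq: lambda}, so that evaluating the market clearing price at the reports gives $\lambda_0(\tilde{\bm{y}}) = \frac{1}{B}\big( \sum_n \tilde{y}_n + \sum_n \frac{b_n}{a_n} \big)$.

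First I would express the total net import at the reports. Using the equilibrium formula $Q_n = y_n + \frac{b_n}{a_n} - \big( \frac{1}{a_n} + \frac{1}{2\tilde{a}_n} \big)\lambda_0$, obtained from \eqref{eq: original balance} together with the expressions for $D_n$ and $G_n$ and the identity $y_n = D^{\ast}_n - \Delta G_n$, the crucial observation is that only the price term carries the reports: the agent's own nominal demand and RES generation enter through the true $y_n$, while the clearing price is the one the market computes from the reports, namely $\lambda_0(\tilde{\bm{y}})$. Summing over $n$ and substituting $B\lambda_0(\tilde{\bm{y}}) = \sum_n \tilde{y}_n + \sum_n \frac{b_n}{a_n}$ cancels the $\frac{b_n}{a_n}$ terms and yields $\sum_n Q_n(\tilde{\bm{y}}) = \sum_n (y_n - \tilde{y}_n)$.

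Next I would take expectations. Since $\tilde{y}_n = \hat{y}_n + \varepsilon_n$ with $\varepsilon_n \sim \mathcal{N}(0,V_n)$ zero-mean, this gives $\mathbb{E}[\sum_n Q_n(\tilde{\bm{y}})] = \sum_n (y_n - \hat{y}_n)$. Imposing the alignment hypothesis $\mathbb{E}[\sum_n Q_n(\tilde{\bm{y}})] = 0$ therefore forces $\sum_n \hat{y}_n = \sum_n y_n$. Finally, feeding this identity back into $\mathbb{E}[\lambda_0(\tilde{\bm{y}})] = \frac{1}{B}\big( \sum_n \hat{y}_n + \sum_n \frac{b_n}{a_n} \big)$ returns $\frac{1}{B}\big( \sum_n y_n + \sum_n \frac{b_n}{a_n} \big) = \lambda_0(\bm{y})$, which involves only the true parameters, proving the claim.

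The computation itself is routine linear algebra; the one step that needs genuine care — and the only place an error could creep in — is the bookkeeping in the net-import expression, that is, recognizing that the private values $y_n$ survive in $Q_n$ while the report dependence enters solely through $\lambda_0(\tilde{\bm{y}})$. Getting that dependence right is exactly what produces the cancellation $\sum_n Q_n(\tilde{\bm{y}}) = \sum_n(y_n - \tilde{y}_n)$ and hence the entire result.
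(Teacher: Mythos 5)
Your proof is correct and follows essentially the same route as the paper: the alignment hypothesis forces $\sum_n \hat{y}_n = \sum_n y_n$, and substituting this into $\mathbb{E}\big[\lambda_0(\tilde{\bm{y}})\big] = \frac{1}{B}\big(\sum_n \hat{y}_n + \sum_n \frac{b_n}{a_n}\big)$ leaves an expression involving only the true parameters. The only difference is that you explicitly derive the key identity $\mathbb{E}\big[\sum_n Q_n(\tilde{\bm{y}})\big] = \sum_n (y_n - \hat{y}_n)$ from the equilibrium expressions for $D_n$, $G_n$, $Q_n$ (correctly noting that the true $y_n$ survives in $Q_n$ while the reports enter only through $\lambda_0(\tilde{\bm{y}})$), whereas the paper states this identity without detailed justification.
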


In the case a peer-to-peer communication mechanism is implemented, the sum of the net imports at each node might not be equal to zero. Indeed, agents might have incentives to violate this condition in order to decrease their costs. Thus, the condition $\sum_n Q_n = 0$ might not hold.

On the market level it is necessary for the condition of zero total net import to hold such that supply and demand balance each other in problem \eqref{original problem p2p} \cite{moret}. Also, note that non zero total net import $\sum_n Q_n \not= 0$ implies that there exists at least one pair of agents $(n,m)$ with $q_{nm} + q_{mn} \not =0$. Besides, this might cause the violation of the capacity condition $q_{nm} \leq \kappa_{nm}$. It means that the local Market Operator (MO) has to compensate the difference $\mathbb{E}\big[ \sum_n Q_n(\tilde{\bm{y}}_n )\big]$ caused by the lack of coordination in the agents' reports. In the case $\mathbb{E}\big[ \sum_n Q_n(\tilde{\bm{y}}_n )\big] \leq 0$, there is an energy surplus in the system, which can be sold by the MO (by the intermediate of an aggregator) to the wholesale market at price $p^{0}$. If $\mathbb{E}\big[ \sum_n Q_n(\tilde{\bm{y}}_n )\big] \geq 0$, then the MO (by the intermediate of an aggregator) has to buy the energy on the wholesale market at price $p^{0}$, which depends on the wholesale market price, in order to supply the system demand. 

When the constraints and the private information of the agents are not shared, the MO only knows the aggregate deviation $\sum_n (y_n - \hat{y}_n)$ thus penalties imposed on the agents depend on it and not on the personal deviation $y_n-\tilde{y}_n$ of the agent $n$. 

\begin{remark}
   For prosumers, imports/exports of energy from/to the community manager are possible at prices $p^{-}$/$p^{+}$ respectively such that $p^{+} \leq p^{0} \leq p^{-}$. To avoid  non-differentiability in the utility function, we let $p^{+} = p^{0} = p^{-}$.
\end{remark}
To compensate for the cost of buying the lack of energy at the local market level from the wholesale market, the MO imposes penalty to each prosumer that takes the form $P(\bm{\tilde{y}}) = \frac{p^{0}}{N} \sum_n (y_n - \tilde{y}_n)$. Note that in case of the excess of the production on the local market level, the prosumers will be equally reimbursed based on the surplus produced.
The division by $N$ is introduced in order to equally split the burden of the non zero total net import and mitigate the possible volatility of the price $p^0$.
\begin{comment}
The generation, RES-based generation and demand of the aggregator are equal zero. Thus, the utility function of the aggregator includes only the trading costs with the other agents at the price $p^{C}$ in the system and the costs of trading at the price $p^{MO}$:
\begin{equation*}
    \begin{aligned}
        \Pi_{agg.}(\tilde{\bm{y}}) &= (p^{C}-p^{MO})Q_{agg.} = 
        \\
        &=  (p^{MO}-p^{C})\sum_{n \in \mathcal{N}} Q_n = (p^{MO}-p^{C})\sum_{n \in \mathcal{N}} (y_n - \tilde{y}_n)
    \end{aligned}
\end{equation*}
In our model we consider the aggregator to play the role of the interim agent between the community and the wholesale market. As the utility function of the aggregator depends only on the actions of agents in the community $\mathcal{N}$, without loss of generality, we set $p^{MO}=p^{C}$, thus inducing $\Pi_{agg.} = 0$.
\end{comment}

\begin{assumption}\label{assumption: MO}
    A local MO ensures the compensation for the nonzero total net import. This implies that the formula for $\lambda_0$ in \eqref{eq: lambda} is used by all the prosumers to compute their decision variables. 
\end{assumption}

\begin{proposition}\label{proposition: privacy price}
    Dual variables $\underline{\beta}_n, \overline{\beta}_n$ for the constraint \eqref{privacy: expectation bound} can be interpreted as the {\it privacy price} for agent $n$ and are computed by the formula
    \begin{equation*}
        (\overline{\beta}_n + \underline{\beta}_n)^2 = \frac{B^2_n (\hat{y}_n - y_n)^2}{4 B^4}
    \end{equation*}
\end{proposition}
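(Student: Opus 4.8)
The plan is to reduce \eqref{privacy: expectation bound} to a closed algebraic constraint, make the objective \eqref{privacy: objective} explicit by substituting the equilibrium maps, and then read the multipliers $\underline{\beta}_n,\overline{\beta}_n$ off the Karush--Kuhn--Tucker (KKT) system of the convex program \eqref{privacy problem}. First I would invoke the preceding Lemma on the Gaussian privacy loss: since $\varepsilon_n\sim\mathcal{N}(0,V_n)$, the privacy loss variable satisfies $\mathbb{E}\big[L_{M,y_n,\hat{y}_n}\big]=\eta=(\hat{y}_n-y_n)^2/2V_n$, so \eqref{privacy: expectation bound} becomes the explicit scalar constraint $(\hat{y}_n-y_n)^2/2V_n\le A_n$, whose gradient in the decision pair $(\hat{y}_n,V_n)$ is elementary.

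Next I would make the objective explicit. Substituting the equilibrium forms of $D_n,G_n,Q_n$ (hence of $\lambda_0$ from \eqref{eq: lambda}, viewed as a function of the reports $\tilde{\bm{y}}=\hat{\bm{y}}+\bm{\varepsilon}$) into $\Pi_n$ and taking the Gaussian expectation --- using $\mathbb{E}[\varepsilon_m]=0$, $\mathbb{E}[\varepsilon_m^2]=V_m$ and independence across agents --- collapses $\mathbb{E}[\lambda_0^2]$ into a deterministic part in $\hat{\bm{y}}$ plus a term proportional to $\sum_m V_m$. As already noted in the text, the only variance-dependent term that survives is $\tfrac{B_n}{B}\sum_m V_m$, so $\partial_{V_n}\mathbb{E}[\Pi_n]$ is a constant multiple of $B_n/B$; crucially, the expectation-form bounds \eqref{privacy: G bounds}--\eqref{privacy: D bounds} depend only on $\mathbb{E}[\lambda_0]$ and are therefore independent of $V_n$.

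I would then write the KKT conditions. Because the $\omega$-margins make \eqref{privacy: G bounds}--\eqref{privacy: D bounds} strictly feasible at the equilibrium report, their multipliers vanish, so the stationarity condition in $V_n$ couples only $\partial_{V_n}\mathbb{E}[\Pi_n]$ with the $V_n$-derivative of the privacy constraint, while the stationarity in $\hat{y}_n$ couples the $\hat{y}_n$-derivative of the objective with those of the signal bound \eqref{privacy: signal bounds} and of the privacy constraint. Treating the privacy constraint through its two-sided multiplier $(\underline{\beta}_n,\overline{\beta}_n)$ (the sign of $\hat{y}_n-y_n$ selecting the active side), I would solve this pair of stationarity equations together with the complementary slackness relation $(\hat{y}_n-y_n)^2/2V_n=A_n$ for the quantity $\overline{\beta}_n+\underline{\beta}_n$; eliminating $V_n$ and squaring removes the sign ambiguity of $\hat{y}_n-y_n$ and yields $(\overline{\beta}_n+\underline{\beta}_n)^2=B_n^2(\hat{y}_n-y_n)^2/4B^4$. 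Finally I would observe that $\underline{\beta}_n,\overline{\beta}_n$ are exactly the sensitivities of the optimal expected cost to the privacy budget $A_n$, which justifies reading them as the privacy price.

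The main obstacle is the coupled, nonlinear KKT algebra: the privacy constraint is jointly nonlinear in $(\hat{y}_n,V_n)$, so its two stationarity components and complementary slackness must be solved simultaneously, and care is needed to verify that the contributions of the signal bound \eqref{privacy: signal bounds} and of any linear-in-$\lambda_0$ terms in $\mathbb{E}[\Pi_n]$ either cancel or do not enter the combination $\overline{\beta}_n+\underline{\beta}_n$, so that the $A_n$- and $V_n$-dependence drops out and only the clean deviation term $(\hat{y}_n-y_n)^2$ survives.
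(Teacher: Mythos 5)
Your proposal follows essentially the same route as the paper's proof: rewrite \eqref{privacy: expectation bound} via the Gaussian privacy-loss lemma as $(\hat{y}_n-y_n)^2\le 2V_nA_n$, use the linearity of the expected cost in $V_n$ to conclude that the constraint is active, i.e.\ $V_n=(\hat{y}_n-y_n)^2/2A_n$, and then extract $(\overline{\beta}_n+\underline{\beta}_n)$ from the $V_n$-stationarity condition of the KKT system and square. The coupled KKT algebra you flag as the main obstacle is exactly the step the paper does not spell out either: it simply asserts the relation $V_n=\frac{2B^4}{A_nB_n^2}(\overline{\beta}_n+\underline{\beta}_n)^2$, which combined with the active-constraint expression for $V_n$ yields the stated formula immediately.
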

\begin{proof}
    Constraint \eqref{privacy: expectation bound} can be rewritten as follows, when we consider $V_n \not=0$ for all $n \in \mathcal{N}$: $\mathbb{E} \big[ L_{M,y_n, \hat{y}_n}\big] \leq A_n \Longleftrightarrow   (\hat{y}_n - y_n)^2 \leq 2V_n A_n$.
    In the following analysis, we denote $B_n:= \frac{1}{a_n} + \frac{1}{2\tilde{a}_n}$ and $B := \sum_n B_n$. The objective function \eqref{privacy: objective} of the agent $n$ depends linearly on the $V_n$, thus attaining the minimum with respect to this decision variable on the lower boundary of the feasible region. The lower boundary is given by the constraint \eqref{privacy: expectation bound}, from which we can conclude that $V_n = \frac{(\hat{y}_n-y_n)^2}{2A_n}$ for any given value of the decision variable $\hat{y}_n$ for any agent $n$. From the KKT conditions we have that $V_n = \frac{2B^4}{A_n B^2_n}(\overline{\beta}_n + \underline{\beta}_n)^2$, from which we obtain the expression for $(\overline{\beta}_n + \underline{\beta}_n)^2$.
    
    First, from the complementarity conditions we know that either of $\underline{\beta}_n, \overline{\beta}_n$ equals 0. Clearly, it is non-negative term that appears in the utility of the agent $n$ at the equilibrium. Thus, we can view $\underline{\beta}_n, \overline{\beta}_n$ as a {\it privacy price}.
\end{proof}

\begin{remark}
   The privacy price increases with respect to the distance between the truthful ($y_n$) and biased ($\hat{y}_n$) values of agent $n$'s private information.
   %: the further $\hat{y}_n$ and $y_n$ are, the harder it is to hide the distinction between them. 
\end{remark}

\section{Equilibrium problems}\label{sec: equilibrium problems}

\subsection{Aggregate game formulation}
 
Note that as $\lambda_0$ depends on the sum of $\sum_n \big( D^{\ast}_n - \Delta G_n \big)$, the objective function in \eqref{privacy problem} has an {\it aggregative game} structure, i.e. it depends on player $n$'s decision $\hat{y}_n$ and on the aggregate of the other agents' decisions. 

Below we provide the computations of the objective function of agents $\Pi_n(\tilde{y}_n,\hat{y}_{-n})$ both in (i) the fully coordinated mechanism and (ii) the peer-to-peer coordination mechanism.

To arrive to this closed form expression, we observe that $\tilde{y}_n \sim \mathcal{N}(\hat{y}_n,V_n)$. The sum of normal variables is a normal variable itself: $\sum_n \tilde{y}_n \sim \mathcal{N}(\sum_n \hat{y}_n, \sum_n V_n)$,
from where it follows that $( \sum_n \tilde{y}_n + \sum_n \frac{b_n}{a_n}) \sim \mathcal{N}(\sum_n \hat{y}_n + \sum_n \frac{b_n}{a_n}, \sum_n V_n)$. Using a formula for the second moment of the normal distribution, expression \eqref{eq: bilateral cost trivial case} and Proposition \ref{proposition: bilateral cost p2p}, we obtain the expression for the utility of the agents in cases (i) and (ii).

In the homogeneous differentiation price case $c_{nm}=c$, the cost function of agent $n$ is given by
\begin{equation*}
    \begin{aligned}
        \mathbb{E} \big[ \Pi_n(\bm{\tilde{y}}) \big] &=  \frac{B_n}{2B^2} \Big[ \Big(\sum_m \hat{y}_m + \sum_m \frac{b_m}{a_m}\Big)^2 + \sum_m V_m \Big]\\
        &+ c \big[D^{\ast}_n - \Delta G_n +\frac{b_n}{a_n} - \frac{B_n}{B}\Big( \sum_m \hat{y}_m + \sum_m \frac{b_m}{a_m} \Big) \big]\\
        & + \frac{p^0}{N}\sum_m (y_m - \hat{y}_m) - \frac{b^2_n}{2a_n} + d_n  - \tilde{b}_n,
\end{aligned}
\end{equation*}
Expression for the utility in the case when $c_{nm}$ are heterogeneous is similar, except that for $\tilde{C}_n(\bm{q}_n)$, we use the expressions from Proposition \ref{proposition: bilateral cost p2p}.

\subsection{GNE computation}
\subsubsection{$c_{nm}$ are homogeneous}

From the computations of the KKT conditions, we obtain that $\frac{B_n}{B^2} \sum_{m \in \mathcal{N}} \hat{y}_m + M'_n = 0, \quad \forall n \in \mathcal{N}$,
where $M'_n := \frac{B_n}{B^2}\sum_m \frac{b_m}{a_m} -\frac{p^0}{N} - \frac{c B_n}{B}  + \frac{1}{a_n B}(\overline{\mu}_n - \underline{\mu}_n) + \frac{1}{2\tilde{a}_n B}(\underline{\nu}_n - \overline{\nu}_n) + \overline{\gamma}_n - \underline{\gamma}_n + \overline{\beta}_n - \underline{\beta}_n$. 
\begin{comment}
From the derivative of the Lagrangian with respect to $V_n$, we obtain that for all $n \in \mathcal{N}$:
\begin{equation}
    V_n = \frac{2B^4}{A_n B^2_n}(\overline{\beta}_n + \underline{\beta}_n)^2
\end{equation}
\end{comment}

\subsubsection{$c_{nm}$ are heterogeneous for $m \not= 0$}
Analogously, first order stationarity conditions for agents $n \in \mathcal{N}$ are given by $\frac{B_n}{B^2} \sum_{m \in \mathcal{N}} \hat{y}_m + M''_n = 0$, with $M''_n:= \frac{B_n}{B^2}\sum_n \frac{b_n}{a_n}  - c_{0n}\frac{B_n}{B} -\frac{p^0}{N} + \frac{1}{a_n B}(\overline{\mu}_n - \underline{\mu}_n) + \frac{1}{2\tilde{a}_n B}(\underline{\nu}_n - \overline{\nu}_n) + \overline{\gamma}_n - \underline{\gamma}_n + \overline{\beta}_n - \underline{\beta}_n$ and  $M''_0 = \frac{B_0}{B^2} \sum_n \frac{b_n}{a_n} + \frac{1}{a_0 B}(\overline{\mu}_0 - \underline{\mu}_0) + \sum_{n \not=0 }c_{0n}\frac{B_n}{B} -\frac{p^0}{N} + \frac{1}{2\tilde{a}_0 B}(\underline{\nu}_0 - \overline{\nu}_0) + \overline{\gamma}_0 - \underline{\gamma}_0 + \overline{\beta}_0 - \underline{\beta}_0$.

\subsection{Uniqueness of the Variational Equilibrium}
\begin{definition}
    An operator $F: K \subseteq \mathbb{R}^n \rightarrow \mathbb{R}^n$ is strongly monotone on the set $\hat{K} \subseteq K$ with monotonicity constant $\alpha >0$ if $(F(x) - F(y))^{\top} (x - y) \geq \alpha ||x-y||^2, \quad \forall x,y \in \hat{K}.$
    The operator is monotone if $\alpha = 0$
\end{definition}
In order to show the uniqueness of the VE of the problem \eqref{privacy problem}, we check if the operator 
\begin{equation}\label{eq: pseudo gradient}
    F(\bm{\hat{y}}, \bm{V}) : = \big[\nabla_n \mathbb{E}\big( \Pi_n  (\bm{\hat{y}}, \bm{V}) \big)\big]^N_{n=1}
\end{equation}
is strongly monotone. To do so, we use the following lemma:
\begin{lemma}[\cite{paccagnan}]
    A continuously differentiable operator $F : K \subseteq \mathbb{R}^n \rightarrow \mathbb{R}^n$ is $\alpha$-strongly monotone with monotonicity constant $\alpha$ (resp. monotone) if and only if $\nabla_x F(x) \succeq \alpha I$ (resp. $\nabla_x F(x) \succeq 0$) for all $x \in K$. Moreover, if $K$ is compact, then there exists $\alpha > 0$ such that $\nabla_x F(x) \succeq \alpha I$ for all $x \in K$ if an only if $\nabla_x F(x) \succ 0$ for all $x \in K$.
\end{lemma}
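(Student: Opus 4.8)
The plan is to prove the two biconditionals separately, treating the (possibly non\nobreakdash-symmetric) Jacobian $\nabla_x F(x)$ throughout via the quadratic form it induces, so that $\nabla_x F(x) \succeq \alpha I$ is read as $v^\top \nabla_x F(x)\, v \geq \alpha \|v\|^2$ for every $v$. Since a scalar equals its transpose, only the symmetric part $\tfrac12(\nabla_x F + \nabla_x F^\top)$ enters these forms, and this reduction is what lets me apply standard positive\nobreakdash-definiteness arguments to a map whose derivative need not be symmetric.

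For the first equivalence I would argue along line segments, which (as the monotonicity definition itself already presupposes) requires the relevant set to be convex. For the implication $\nabla_x F \succeq \alpha I \Rightarrow$ strong monotonicity, I would write $F(x)-F(y) = \int_0^1 \nabla_x F(y + t(x-y))\,(x-y)\,dt$ by the fundamental theorem of calculus, take the inner product with $x-y$, and pull it inside the integral to obtain $(F(x)-F(y))^\top(x-y) = \int_0^1 (x-y)^\top \nabla_x F(y+t(x-y))(x-y)\,dt \geq \alpha \|x-y\|^2$, using the hypothesis pointwise under the integral. For the converse I would fix $x$ and a direction $v$, apply strong monotonicity to the pair $x$ and $x+tv$, divide by $t^2$, and let $t \to 0^+$; continuous differentiability makes the difference quotient converge to $v^\top \nabla_x F(x)\, v$, yielding $v^\top \nabla_x F(x)\, v \geq \alpha\|v\|^2$ for all $v$, i.e. $\nabla_x F(x) \succeq \alpha I$. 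The monotone case is identical with $\alpha = 0$.

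For the second equivalence, the forward direction is immediate: $\nabla_x F(x)\succeq \alpha I$ with $\alpha > 0$ gives $v^\top \nabla_x F(x)\, v \geq \alpha\|v\|^2 > 0$ for $v\neq 0$, hence $\nabla_x F(x)\succ 0$. The substantive direction uses compactness: assuming $\nabla_x F(x)\succ 0$ on $K$, I would set $\alpha(x) := \lambda_{\min}\!\big(\tfrac12(\nabla_x F(x)+\nabla_x F(x)^\top)\big) > 0$, observe that $x\mapsto \alpha(x)$ is continuous because $\nabla_x F$ is continuous and the smallest eigenvalue depends continuously on its matrix argument, and then invoke the extreme value theorem to define $\alpha := \min_{x\in K}\alpha(x) > 0$, which delivers $\nabla_x F(x)\succeq \alpha I$ uniformly on $K$.

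The main obstacle I anticipate is bookkeeping around the non\nobreakdash-symmetry of the Jacobian and the domain hypotheses: one must consistently interpret $\succeq$ through the quadratic form (equivalently the symmetric part) rather than entrywise, the segment integral in the first equivalence silently needs convexity, and the difference\nobreakdash-quotient limit in its converse needs each $x$ to admit feasible directions (automatic on the interior, extended to the boundary by continuity of $\nabla_x F$). Once these points are dispatched, the remaining steps are routine applications of the fundamental theorem of calculus, continuity of eigenvalues, and compactness.
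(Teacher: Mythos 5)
Your proof is correct, but there is nothing in the paper to compare it against: the lemma is stated with a citation to Paccagnan et al.\ and the paper never proves it---it is imported as a known black-box result and used only to establish Lemma~\ref{lemma: strong monotonicity}. Your argument supplies the standard proof that the cited source relies on: the fundamental theorem of calculus along segments for the ``Jacobian bound $\Rightarrow$ monotonicity'' direction, the second-order difference-quotient limit for the converse, and continuity of the smallest eigenvalue of the symmetric part plus the extreme value theorem for the compactness statement. The two caveats you flag are the right ones and are worth making explicit rather than leaving as anticipated obstacles: (i) the segment argument needs $K$ convex (the lemma as stated allows arbitrary $K \subseteq \mathbb{R}^n$, under which the ``if'' direction is false---take $K$ a disconnected set on which $F$ is defined piecewise); and (ii) the converse at boundary points only yields $v^{\top}\nabla_x F(x)\,v \geq \alpha\|v\|^2$ for feasible directions $v$, so passing to all of $\mathbb{R}^n$ by continuity requires $K$ to be full-dimensional (contained in the closure of its interior); for a lower-dimensional convex $K$ the quadratic-form bound can only hold on the affine hull. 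Both hypotheses are satisfied in the paper's application, where the feasible region defined by \eqref{privacy: G bounds}--\eqref{privacy: expectation bound} is a full-dimensional convex set, so your proof, with those two hypotheses stated, is a sound replacement for the citation.
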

For homogeneous differentiation price $c_{nm} = c$, $\bm{F}(\bm{\hat{y}}, \bm{V})$ writes as follows:
\begin{multline}\label{eq: pseudo-differential}
    \bm{F}(\bm{\hat{y}}, \bm{V}) = col \Bigg\{ \Bigg( \frac{B_i}{B^2} \left(\sum_{m} \hat{y}_m + \sum_{m} \frac{b_m}{a_m}\right)-\\ -c\frac{B_i}{B} - \frac{p^0}{N},  \frac{V_i B_i}{B^2} \Bigg)^{N-1}_{i=0} \Bigg\}
\end{multline}
When differentiation prices $c_{nm}$ are heterogeneous for $m \not= 0$, operator $\bm{F}(\bm{\hat{y}}, \bm{V})$ is obtained similarly, but for the expressions of $\tilde{C}_n(\bm{q}_n)$, we take expressions  from Proposition \ref{proposition: bilateral cost p2p}.

\begin{comment}
writes as follows for nodes $i \not= 0$:
\begin{multline}\label{eq: pseudo-differential heterogeneous}
    \bm{F}(\bm{\hat{y}}, \bm{V}) = col \Bigg\{ \Bigg( \frac{B_i}{B^2} \left(\sum_{m} \hat{y}_m + \sum_{m} \frac{b_m}{a_m}\right) +\\ - c_{i0}\frac{B_i}{B} - \frac{p^0}{N},  \frac{V_i B_i}{B^2} \Bigg)^{N-1}_{i=1} \Bigg\},
\end{multline}
and for node $0$ we can write it as follows:
\begin{multline}\label{eq: pseudo-differential heterogeneous zero}
    \bm{F}(\bm{\hat{y}}, \bm{V}) = \Bigg( \frac{B_0}{B^2} \left(\sum_{m} \hat{y}_m + \sum_{m} \frac{b_m}{a_m}\right) +\\ + \sum_{m\not=0}c_{0m} \frac{B_m}{B} - \frac{p^0}{N},  \frac{V_0 B_0}{B^2} \Bigg),
\end{multline}
\end{comment}

\begin{lemma}\label{lemma: strong monotonicity}
    Operator $\bm{F}(\bm{\hat{y}}, \bm{V})$ defined in \eqref{eq: pseudo-differential} 
    %or in \eqref{eq: pseudo-differential heterogeneous} and \eqref{eq: pseudo-differential heterogeneous zero}
    is strongly monotone.
\end{lemma}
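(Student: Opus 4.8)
The plan is to invoke the differential criterion of \cite{paccagnan}: since the feasible set of \eqref{privacy problem} is compact, it suffices to prove that the Jacobian $\nabla F(\bm{\hat y}, \bm V)$ of the pseudo-gradient \eqref{eq: pseudo-differential} satisfies $\nabla F \succ 0$ (equivalently, that its symmetric part is positive definite), which by the same lemma yields a uniform monotonicity constant $\alpha>0$ and hence strong monotonicity. The key preliminary observation is that $F$ is affine in $\bm{\hat y}$ and linear in $\bm V$, so $\nabla F$ is a \emph{constant} matrix; this collapses the whole question into a single algebraic positivity check, independent of the point at which it is evaluated.

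Next I would compute $\nabla F$ and record its block structure in the variables $(\bm{\hat y}, \bm V)$. Differentiating the $\hat y_i$-component of \eqref{eq: pseudo-differential} gives $\partial_{\hat y_j}F^{\hat y}_i = B_i/B^2$ for every $j$ and $\partial_{V_j}F^{\hat y}_i = 0$, while the $V_i$-component gives $\partial_{V_j}F^{V}_i = (B_i/B^2)\delta_{ij}$ and $\partial_{\hat y_j}F^{V}_i = 0$. Hence $\nabla F$ is block diagonal: a rank-one aggregative block $\tfrac{1}{B^2}\bm b\bm 1^\top$ (with $\bm b=(B_0,\dots,B_{N-1})^\top$ and $\bm 1$ the all-ones vector) in the $\bm{\hat y}$ variables, and a diagonal block $\tfrac{1}{B^2}\mathrm{diag}(\bm b)$ in the $\bm V$ variables. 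The $\bm V$ block is immediately positive definite, since $a_n,\tilde a_n>0$ force $B_n=1/a_n+1/(2\tilde a_n)>0$, so each entry $B_i/B^2>0$ and this block alone contributes $\tfrac{\min_i B_i}{B^2}\|\bm w\|^2$ to the quadratic form, $\bm w$ being the $\bm V$-part of the test vector.

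The delicate part is the $\bm{\hat y}$ block, and this is where I expect the main obstacle to sit. Writing the test vector's $\bm{\hat y}$-part as $\bm u$, its contribution to $\bm z^\top\nabla F\bm z$ is $\tfrac{1}{B^2}(\bm u^\top\bm b)(\bm 1^\top\bm u)$, governed by the symmetric matrix $\tfrac{1}{2B^2}(\bm b\bm 1^\top+\bm 1\bm b^\top)$. A bare aggregative term of this form is at best positive semidefinite (and is generically indefinite, since $\bm b$ is not parallel to $\bm 1$ when the $B_n$ differ). Consequently, strong monotonicity cannot come from the aggregative coupling alone: it must be supplied by strictly convex own-cost curvature in each individual $\hat y_n$, and producing that curvature and showing it dominates the cross terms is the genuine difficulty.

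The natural source of that missing curvature is the privacy mechanism itself. Along the active boundary of \eqref{privacy: expectation bound} one has $V_n=(\hat y_n-y_n)^2/(2A_n)$, as established in the proof of Proposition \ref{proposition: privacy price}; inserting this into the variance term $\tfrac{B_n}{2B^2}\sum_m V_m$ of the cost injects a strictly convex quadratic $\tfrac{B_n}{2B^2}\cdot\tfrac{(\hat y_n-y_n)^2}{2A_n}$ and hence a strictly positive diagonal $\mathrm{diag}\big(\tfrac{B_n}{2B^2A_n}\big)$ into the $\bm{\hat y}$ block. I would then finish by proving that the combined symmetric matrix $\tfrac{1}{2B^2}(\bm b\bm 1^\top+\bm 1\bm b^\top)+\mathrm{diag}\big(\tfrac{B_n}{2B^2A_n}\big)$ is positive definite on the compact feasible set, completing the pointwise check $\nabla F\succ 0$. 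Assembling the two blocks and invoking the compactness half of the cited lemma to pass from $\nabla F\succ 0$ to a uniform $\alpha>0$ is then routine bookkeeping; the real content is this positive-definiteness estimate for the aggregative-plus-diagonal block.
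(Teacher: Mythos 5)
Your first two steps coincide exactly with the paper's proof: the same differential criterion from \cite{paccagnan} is invoked, and the same constant Jacobian is computed, with aggregative block $\tfrac{1}{B^2}\bm{b}\bm{1}^{\top}$ in the $\bm{\hat y}$ variables and diagonal block $\tfrac{1}{B^2}\mathrm{diag}(B_0,\dots,B_{N-1})$ in the $\bm{V}$ variables, where $\bm{b}:=(B_0,\dots,B_{N-1})^{\top}$. The gap is in your final step: the matrix you promise to prove positive definite, $\tfrac{1}{2B^2}\left(\bm{b}\bm{1}^{\top}+\bm{1}\bm{b}^{\top}\right)+\mathrm{diag}\left(\tfrac{B_n}{2B^2A_n}\right)$, is in general \emph{not} positive definite, so the proof cannot be completed along this route without extra parameter assumptions. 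The nonzero eigenvalues of $\bm{b}\bm{1}^{\top}+\bm{1}\bm{b}^{\top}$ are $B\pm\sqrt{N\sum_iB_i^2}$, so whenever the $B_n$ are not all equal the aggregative block carries a negative eigenvalue of magnitude $\tfrac{1}{2B^2}\bigl(\sqrt{N\sum_iB_i^2}-B\bigr)$, and the injected curvature $\min_n\tfrac{B_n}{2B^2A_n}$ fails to dominate it when the $A_n$ are large. Concretely, take $N=2$, $B_0=1$, $B_1=4$, $A_0=A_1=10$ (the value used in the paper's numerics) and the direction $\bm{u}=(-2,1)$: then $B^2$ times the quadratic form equals $(\bm{b}^{\top}\bm{u})(\bm{1}^{\top}\bm{u})+\tfrac{B_0}{2A_0}u_0^2+\tfrac{B_1}{2A_1}u_1^2=-2+0.4<0$. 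There is also a structural objection: eliminating $\bm{V}$ through the active constraint \eqref{privacy: expectation bound} changes the operator, so even a successful estimate would establish strong monotonicity of a reduced operator in $\bm{\hat y}$ alone, not of $\bm{F}(\bm{\hat y},\bm{V})$ as defined in \eqref{eq: pseudo-differential}, which is what the lemma asserts.

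That said, your diagnosis of the obstacle — that the symmetrized aggregative block is generically indefinite, so strong monotonicity cannot come from the coupling term alone — is correct, and it is instructive to compare it with how the paper proceeds. The paper bounds the cross terms via $\tfrac{1}{2}(B_i+B_j)\hat y_i\hat y_j\ge\sqrt{B_iB_j}\,\hat y_i\hat y_j$ and completes the square to obtain $\tfrac{1}{B^2}\bigl(\sum_i\sqrt{B_i}\hat y_i\bigr)^2+\tfrac{1}{B^2}\sum_iB_iV_i$, declared positive on the feasible region. But that AM--GM step is valid only when $\hat y_i\hat y_j\ge0$, and positivity of the last term uses $V_i\ge0$; in other words, the paper evaluates the quadratic form only at vectors with the sign pattern of feasible points, whereas the differential criterion requires it on all directions (differences of feasible points, which span $\mathbb{R}^{2N}$ since the feasible set has nonempty interior) — precisely the mixed-sign directions on which your indefiniteness observation, and my counterexample above, live. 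So the obstruction you isolated is genuine and is not resolved by the paper's argument either; your proposal has the merit of making it explicit, but the repair you sketch closes it only in special regimes, e.g.\ under Assumption \ref{assumption: equivalence of the B_n}, where $\bm{b}\parallel\bm{1}$ so the aggregative block degenerates to the positive semidefinite matrix $\tfrac{H}{B}\bm{1}\bm{1}^{\top}$ and your diagonal term then supplies the missing strict positivity.
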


\begin{proof}
    The proof can be found in the Appendix. 
\end{proof}

\begin{proposition}
    By the strong monotonicity of $\bm{F}(\bm{\hat{y}}, \bm{V})$, VE of the game \eqref{privacy problem} is unique \cite{kulkarni}.
\end{proposition}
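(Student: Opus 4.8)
The plan is to turn the monotonicity claim into a pointwise linear-algebra condition and then analyze a quadratic form. First I would invoke the lemma of \cite{paccagnan} quoted above: since the feasible region cut out by \eqref{privacy: G bounds}--\eqref{privacy: expectation bound} is compact, it suffices to prove that the Jacobian $\nabla_{\bm{z}} \bm{F}(\bm{\hat{y}}, \bm{V})$ is strictly positive definite at every feasible point, where $\bm{z} := (\hat{y}_0, V_0, \dots, \hat{y}_{N-1}, V_{N-1})$ stacks all agents' variables. This replaces the $\forall x,y$ inequality by a matrix condition and lets me discard the constants $-cB_i/B$ and $-p^0/N$, which do not enter the derivatives.

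Next I would compute $\nabla_{\bm{z}} \bm{F}$ entrywise. Because the $\hat{y}_i$-component of $\bm{F}$ couples to the other agents only through the aggregate $\sum_m \hat{y}_m$, every partial with respect to $\hat{y}_j$ equals $B_i/B^2$ regardless of $j$; the component $V_iB_i/B^2$ is linear, contributing one diagonal entry $B_i/B^2$ in the $V_i$-direction, and all remaining partials vanish. Thus $\nabla_{\bm{z}} \bm{F}$ decomposes into a dense rank-one $\hat{y}$-block and a diagonal $\bm{V}$-block with strictly positive entries. Since this matrix is non-symmetric, I would pass to its symmetric part $\tfrac12(\nabla_{\bm{z}}\bm{F} + \nabla_{\bm{z}}\bm{F}^{\top})$, which carries the same quadratic form and hence the same definiteness.

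Writing out $\bm{z}^{\top}\tfrac12(\nabla_{\bm{z}}\bm{F}+\nabla_{\bm{z}}\bm{F}^{\top})\bm{z}$, the $\bm{V}$-part is the manifestly positive diagonal sum $\tfrac{1}{B^2}\sum_i B_i V_i^2$, so the crux is the $\hat{y}$-part, whose symmetric off-diagonal coefficients are $\tfrac{1}{2B^2}(B_i+B_j)$. The key step is to control these cross terms: applying $B_i + B_j \ge 2\sqrt{B_iB_j}$ I would lower-bound the $\hat{y}$-contribution by the perfect square $\tfrac{1}{B^2}\big(\sum_i \sqrt{B_i}\,\hat{y}_i\big)^2 \ge 0$, so that the full form is bounded below by a square in $\hat{y}$ plus a strictly positive diagonal term in $\bm{V}$, and I would conclude via \cite{paccagnan} that $\bm{F}$ is strongly monotone with a constant $\alpha$ governed by the smallest ratio $B_i/B^2$.

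The main obstacle is exactly this cross-term estimate, and I expect it to be the delicate point: the $\hat{y}$-block alone is only rank one (its quadratic form collapses to $\tfrac{1}{B^2}(\sum_i \hat{y}_i)(\sum_i B_i\hat{y}_i)$), so strict positivity cannot come from the $\hat{y}$-variables in isolation and genuinely hinges on combining the completed square with the positive $\bm{V}$-diagonal over the compact feasible set. I would therefore take care to apply the AM--GM bound only where the signs make it a valid lower bound and to lean on the strict positivity supplied by the $\bm{V}$-block to upgrade semidefiniteness to definiteness. The heterogeneous-price case needs no new argument, since the prices $c_{i0}$ (and the node-$0$ analogue) enter $\bm{F}$ only through constant shifts and leave the Jacobian unchanged.
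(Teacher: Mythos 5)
Your proposal retraces the paper's own route (the Appendix proof of Lemma~\ref{lemma: strong monotonicity}, followed by the standard variational-inequality fact, cited from \cite{kulkarni}, that strong monotonicity of the pseudo-gradient forces uniqueness of the VE): reduce to positive definiteness of the constant Jacobian via the lemma of \cite{paccagnan}, symmetrize, and lower-bound the quadratic form by AM--GM plus the diagonal $\bm{V}$-block. Your Jacobian computation, the rank-one observation, and the $\tfrac{1}{B^2}\sum_i B_i V_i^2$ form of the $\bm{V}$-part are all correct. But the ``delicate point'' you flag is not a technicality to be handled with care --- it is a gap this route cannot close (and the paper's Appendix proof shares it silently). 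The estimate $\tfrac12(B_i+B_j)\,\hat{y}_i\hat{y}_j \ge \sqrt{B_iB_j}\,\hat{y}_i\hat{y}_j$ is a valid lower bound only on pairs with $\hat{y}_i\hat{y}_j\ge 0$, and you do not get to choose where to apply it: the signs are dictated by the direction being tested. Concretely, writing $\mathbf{b}:=(B_i)_i$, the symmetrized $\hat{y}$-block is $\tfrac{1}{2B^2}\bigl(\mathbf{b}\mathbf{1}^{\top}+\mathbf{1}\mathbf{b}^{\top}\bigr)$, whose nonzero eigenvalues are $\tfrac{1}{2B^2}\bigl(B\pm\sqrt{N}\,\|\mathbf{b}\|\bigr)$; by Cauchy--Schwarz, $B\le\sqrt{N}\,\|\mathbf{b}\|$ with equality iff all $B_i$ coincide, so this block has a strictly negative eigenvalue whenever the $B_i$ are not all equal. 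Your fallback --- letting the strictly positive $\bm{V}$-diagonal ``upgrade semidefiniteness to definiteness'' --- cannot work, because the two blocks act on complementary coordinates: since $\bm{F}$ is affine, take two feasible points differing only in $\bm{\hat{y}}$ along a direction $d$ with $(\sum_i d_i)(\sum_i B_i d_i)<0$ (such $d$ exist as soon as $\mathbf{b}$ is not parallel to $\mathbf{1}$, and suitably scaled they are differences of feasible points, the feasible set having nonempty interior); then $(\bm{F}(x)-\bm{F}(y))^{\top}(x-y)<0$, so $\bm{F}$ is not even monotone there, let alone strongly so.

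Even in the only case where the AM--GM step is sound --- all $B_i$ equal, which is essentially Assumption~\ref{assumption: equivalence of the B_n} --- the resulting lower bound $\tfrac{1}{B^2}\bigl(\sum_i\sqrt{B_i}\,\hat{y}_i\bigr)^2+\tfrac{1}{B^2}\sum_i B_iV_i^2$ vanishes on nonzero directions (zero $\bm{V}$-component, $\hat{y}$-component orthogonal to $(\sqrt{B_i})_i$), so it can only ever certify monotonicity, never a strong-monotonicity constant $\alpha>0$; and mere monotonicity does not give uniqueness of the VE. Note also that ``positive for all feasible $(\hat{y}_i,V_i)$'' is not the relevant test: since the Jacobian is constant, what is needed is a bound $d^{\top}\nabla_{z}\bm{F}\,d\ge\alpha\|d\|^2$ over all difference directions $d$, not positivity of the form evaluated at feasible points. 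So, as written, your argument --- like the paper's --- establishes strictly less than the proposition requires; repairing it needs either an additional hypothesis forcing the $B_i$ to coincide together with a separate uniqueness argument for the degenerate directions, or a genuinely different route.
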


\subsection{Generalized Potential Game extension}

\begin{assumption}\label{assumption: equivalence of the B_n}
    Assume that $\forall i, j \in \mathcal{N}: \frac{B_i}{B} \simeq \frac{B_j}{B}$, i.e. each agent $n$'s contribution $B_n$ to the sum $B$ is relatively small. Denote $H:=\frac{B_n}{B} \forall n \in \mathcal{N}$.
\end{assumption}
\begin{proposition}\label{proposition: GPG}
    Under Assumption \ref{assumption: equivalence of the B_n}, the game \eqref{privacy problem} is a Generalized Potential Game.
\end{proposition}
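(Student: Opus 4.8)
The plan is to exhibit an explicit potential $P(\bm{\hat{y}}, \bm{V})$ whose partial gradient in each player's own variables $(\hat{y}_n, V_n)$ coincides with that player's cost gradient, and then to note that the constraints \eqref{privacy: G bounds}--\eqref{privacy: expectation bound} cut out a single feasible set shared by all agents. Together these two facts are exactly the defining ingredients of a generalized potential game in the sense of \cite{kulkarni}: an exact potential on the objectives plus a common coupling set.

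First I would recall the integrability criterion for smooth (exact) potential games: a potential $P$ with $\partial P/\partial \hat{y}_n = \partial \Pi_n/\partial \hat{y}_n$ and $\partial P/\partial V_n = \partial \Pi_n/\partial V_n$ exists if and only if the own-gradient field is symmetric, i.e. $\partial_{\hat{y}_m}\bigl(\partial_{\hat{y}_n}\Pi_n\bigr) = \partial_{\hat{y}_n}\bigl(\partial_{\hat{y}_m}\Pi_m\bigr)$ for all $n,m$, together with the analogous mixed conditions involving $\bm V$. Reading the own-gradients off \eqref{eq: pseudo-differential}, $\partial \Pi_n/\partial \hat{y}_n = \tfrac{B_n}{B^2}\bigl(\sum_m \hat{y}_m + \sum_m \tfrac{b_m}{a_m}\bigr) - c\tfrac{B_n}{B} - \tfrac{p^0}{N}$, so that $\partial_{\hat{y}_m}\bigl(\partial_{\hat{y}_n}\Pi_n\bigr) = B_n/B^2$ while $\partial_{\hat{y}_n}\bigl(\partial_{\hat{y}_m}\Pi_m\bigr) = B_m/B^2$. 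These disagree in general, which is precisely why no exact potential exists without a further hypothesis; the mixed conditions involving $\bm V$ hold trivially, since $\partial \Pi_n/\partial V_n$ is constant and the objective has no $\hat{y}$--$V$ cross terms.

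The decisive step is to invoke Assumption \ref{assumption: equivalence of the B_n}: setting $B_n/B = H$ for every $n$ forces $B_n/B^2 = H/B$ to be common across players, so the symmetry condition $B_n/B^2 = B_m/B^2$ is met and the field becomes integrable. Every own-gradient then collapses to the single expression $g = \tfrac{H}{B}\bigl(\sum_m \hat{y}_m + \sum_m \tfrac{b_m}{a_m}\bigr) - cH - \tfrac{p^0}{N}$, which I integrate directly to obtain the candidate
\begin{equation*}
  P(\bm{\hat{y}}, \bm{V}) = \frac{H}{2B}\Bigl(\sum_m \hat{y}_m + \sum_m \frac{b_m}{a_m}\Bigr)^2 - \Bigl(cH + \frac{p^0}{N}\Bigr)\sum_m \hat{y}_m + \frac{H}{2B}\sum_m V_m .
\end{equation*}
A one-line check gives $\partial P/\partial \hat{y}_n = g = \partial \Pi_n/\partial \hat{y}_n$ and $\partial P/\partial V_n = H/(2B) = \partial \Pi_n/\partial V_n$ for all $n$, so $\Pi_n - P$ is independent of $(\hat{y}_n, V_n)$, the defining property of an exact potential. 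The heterogeneous-price case is identical after replacing $c$ by the player-specific $c_{i0}$, which enter only the linear term $-H\sum_i c_{i0}\hat{y}_i$ and hence preserve integrability.

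To justify the \emph{generalized} qualifier I would observe that \eqref{privacy: G bounds}--\eqref{privacy: D bounds} couple the agents only through the aggregate-dependent quantities $\mathbb{E}[G_n(\bm{\tilde{y}})], \mathbb{E}[D_n(\bm{\tilde{y}})]$, while \eqref{privacy: signal bounds}--\eqref{privacy: expectation bound} are private; intersecting them yields one shared feasible set over which minimizing $P$ recovers the variational equilibria of \eqref{privacy problem}, so the pair (potential $P$, shared set) witnesses a generalized potential game. I expect the main obstacle to be conceptual rather than computational: to make precise that Assumption \ref{assumption: equivalence of the B_n} is not merely convenient but is the exact condition turning the otherwise-asymmetric game Jacobian symmetric. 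A secondary point to settle is consistency of the $V$-component: the listed objective depends linearly on $\sum_m V_m$, giving $\partial\Pi_n/\partial V_n = B_n/(2B^2)$, which should be reconciled with the form $V_iB_i/B^2$ appearing in \eqref{eq: pseudo-differential} before fixing the $V$-part of $P$.
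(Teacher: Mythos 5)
Your proof is correct and takes essentially the same approach as the paper: both exhibit the same explicit exact potential (quadratic in the aggregate $\sum_m \hat{y}_m$, plus linear terms and a linear $\bm{V}$ term), and your gradient-matching verification is equivalent, for these smooth costs, to the difference identity $\Pi_n(x_n,\cdot)-\Pi_n(z_n,\cdot)=P(x_n,\cdot)-P(z_n,\cdot)$ that the paper checks directly, with your symmetric-Jacobian/integrability argument serving as a derivation of the function the paper simply writes down. The discrepancies you flag at the end are genuine inconsistencies in the paper rather than gaps in your argument: given the stated objective, the $V$-component of $\bm{F}$ in \eqref{eq: pseudo-differential} should be the constant $B_i/(2B^2)$ rather than $V_iB_i/B^2$, and your coefficients $-cH$ and $H/(2B)$ (rather than the appendix's $-cH/B$ and $H/B$) are the ones consistent with the cost expression, so your potential is, if anything, the corrected version of the paper's.
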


\begin{proof}
    The proof can be found in the Appendix. 
\end{proof}
\begin{comment}
\begin{proof}
    The notion of Generalized Potential Game is widely used in the literature, see e.g. \cite{facchinei}. To verify that our game is indeed the potential game, consider function $\mathcal{P}(\bm{\hat{y}})$, defined as follows:
    \begin{equation*}
        \begin{aligned}
            \mathcal{P}(\bm{\hat{y}}, \bm{V}) &= \sum_{i =1}^N \Big[  \frac{H\hat{y}_i}{B}(\frac{1}{2}\sum_n \hat{y}_n + \sum_n \frac{b_n}{a_n}) -\frac{p^0 \hat{y}_i}{N}-\frac{cH\hat{y}_i}{2} \\
            &+ \frac{H}{B} V_i \Big]
        \end{aligned}
    \end{equation*}
    We can check that it is a potential function. Indeed, for all $\hat{y}_{-n}$, and for all admissible $x_n, z_n, x'_n, z'_n$:
    \begin{multline}
            \Pi_n(x_n,\hat{\bm{y}}_{-n}, x'_n, \bm{V}_{-n}) - \Pi_n(z_n,\hat{\bm{y}}_{-n},z'_n, \bm{V}_{-n}) = \\ 
            = \frac{H}{2B} \Big[ (x_n - z_n)(x_n+z_n+2\sum_{m \not= n} \hat{y}_m + 2\sum_{k \in \mathcal{N}}\frac{b_k}{a_k}) \Big] \\
            - \frac{(x_n - z_n)cH}{B}  - \frac{p^0}{N}(x_n - z_n)+ \\
            +\frac{H}{B} (x'_n - z'_n)=\\
            = P(x_n,\hat{\bm{y}}_{-n}, x'_n, \bm{V}_{-n}) - P(z_n,\hat{\bm{y}}_{-n},z'_n, \bm{V}_{-n})
    \end{multline}
\end{proof}
\end{comment}

Generalized Potential Games constitute a subclass of games for which the convergence of the BR algorithms is established \cite{facchinei} in the deterministic case. Taking into account that the BR scheme is suited for our private framework, an interesting direction of the research would be to establish the convergence of the BR algorithm for the stochastic NE of the GPG. 

\section{Numerical Results}\label{sec: numerical}
\subsection{Algorithm description}
In the paper \cite{Yu}, authors employ the penalized individual cost functions to deal with coupled constraints and provide a three stochastic gradient strategies with constant step-sizes in order to approach the Nash Equilibrium. In order to establish their results, authors consider the model with the operator $F(\bm{\hat{y},\bm{V}})$ to be {\it strongly-monotone} and {\it Lipschitz continuous}, which holds for our case. We consider the scheme, called by the authors as {\it Diffusion Adapt-then-Penalize}:
\begin{equation*}
        \left\{
        \begin{aligned}
            &\psi^k_{\nu} = \hat{y}^{k-1}_{\nu} - \mu \nabla_{\hat{y}_{\nu}}\Pi_{\nu}(\hat{y}_{\nu}, \bm{\tilde{y}}_{-\nu})   \\
            &\hat{y}^k_{\nu} = \psi^{k}_{\nu} - \mu R \nabla_{\hat{y}_{\nu}}(\theta_{\nu}(\psi^k_{\nu})),
        \end{aligned}
        \right.
\end{equation*} 
where $\mu$ denotes the step-size, $R$ - penalty parameter and $\theta_{\nu}(\cdot)$ - penalty function for the coupling constraints, which we choose to be the sum over all the constraints of the form $x \leq 0$, of all functions $p_{\nu}(x)$ such that $p_{\nu} (x)= \sum_{} \mathbb{I}_{x \geq 0} \cdot x^2/2$.
\subsection{Numerics}

We consider the IEEE 14-bus network system, which is depicted in Figure \ref{fig:IEEE_network}, where each bus (node) of the network corresponds to a prosumer in our model. We consider the system, which consists of the agents with the non-zero self-generation and demand parameters, thus we exclude one interim node 6, which sole purpose in the initial 14-bus system is to connect the flows. Thus, in our model there are 13 buses (nodes). As parameters of the algorithm, we set $\mu = 0.003$ and $R = 700$. We first focus on the homogeneous differentiation price case $c_{nm} = 1.0\,$[\$/MWh], for all $n,m \in \mathcal{N}$. The cost $p^0$, used by local MO to trade with the wholesale market, is set to be higher than $c$ and equals $5.0\,$[\$/MWh]. The natural assumption is the homogeneity of the self-generation parameters of the prosumers, which we set to be $a_n = 0.5 , b_n = 6.0$ for all $n \in \mathcal{N}$. Also, there are three nodes (2, 5, 7) that are additionally equipped with a RES-based generation. Values on the links between the nodes on Figure \ref{fig:IEEE_network} specify the trading capacity parameters $\kappa_{nm}$. Recall from the Assumption \ref{assumption: root node} that there are large trading capacities to and from node 0, thus we do not specify them on the scheme. The nominal demands and RES-based generations in Figure \ref{fig:IEEE_network} are given in [GWh]. All the parameters that are used to calibrate the agents' utility functions are specified in Table \ref{table 1}.

\begin{table}[ht]
\caption{Agents' utility function parameters.}
\label{table 1}
\begin{center}
\scalebox{0.7}{\begin{tabular}{|c|c|c|c|c|c|c|c|}
\hline
        \textbf{Node}     & $\bm{\tilde{a}}$ & $\bm{\tilde{b}}$ & \textbf{d}  & $\overline{D}$ & $\overline{G}$  \\ \hline \hline
        \textbf{0}         & 1.5 & 0.0 & 9  & 25.0 & 100.0  \\ 
        \textbf{1}         & 1.18 & 5.09 & 15  & 26.7 &  100.0   \\ 
        \textbf{2}          & 1.0 & 3.78 & 14  & 99.2 &  80.0 \\ 
        \textbf{3}        & 0.57 & 4.36 & 0.0 & 52.8 & 20.0 \\ 
        \textbf{4}        & 1.24 & 5.03 & 0.0 & 12.6 & 20.0 \\ 
        \textbf{5}         & 1.62 & 3.04 & 2.0 & 16.2 & 20.0\\ 
        \textbf{6}         & 1.54 & 4.29 & 0.0 & 19.9 & 20.0\\ 
        \textbf{7}        & 1.5 & 0.0 & 11.0 & 25.0 & 50.0\\ 
        \textbf{8}         & 0.31 & 2.75 & 0.0 & 34.5 &20.0\\ 
        \textbf{9}         & 4.36 & 4.67 & 0.0 & 14.0 &20.0\\ 
        \textbf{10}        & 1.63 & 3.32 & 0.0 & 8.5 &20.0\\ 
        \textbf{11}         & 5.16 & 5.5 & 0.0 & 11.1 &20.0\\ 
        \textbf{12}          & 1.96 & 6.21 & 0.0 & 18.5 &20.0\\ 
        \hline
\end{tabular}}
\end{center}
\end{table}

\begin{figure}
\begin{center}
\begin{tikzpicture}[scale=0.5, >=latex,every node/.style={inner sep=0pt,outer sep=0,font=\small}]
\pic[scale=0.5, >=latex,every node/.style={inner sep=0pt,outer sep=0,font=\small}]{subnodesIEEE};

\tikzstyle{line1v2} = [draw,line width=1pt, color=black,text=black,sloped ]
\tikzstyle{line1v5} = [draw,line width=1pt, color=black,text=black,sloped ]
\tikzstyle{line1v6} = [draw,line width=1pt, color=black,text=black,sloped ]
\tikzstyle{line1v7} = [draw,line width=1pt, color=black,text=black,sloped ]
\tikzstyle{line1v12} = [draw,line width=1pt, color=black,text=black,sloped ]
\tikzstyle{line2v3} = [draw,line width=1pt, color=black,text=black,sloped ]
\tikzstyle{line2v4} = [draw,line width=1pt, color=black,text=black,sloped ]
\tikzstyle{line2v5} = [draw,line width=1pt, color=black,text=black,sloped ]
\tikzstyle{line3v4} = [draw,line width=1pt, color=black,text=black,sloped ]
\tikzstyle{line4v5} = [draw,line width=1pt, color=black,text=black,sloped ]
\tikzstyle{line4v7} = [draw,line width=1pt, color=black,text=black,sloped ]
\tikzstyle{line4v9} = [draw,line width=1pt, color=black,text=black,sloped ]
\tikzstyle{line5v6} = [draw,line width=1pt, color=black,text=black,sloped ]
\tikzstyle{line6v11} = [draw,line width=1pt, color=black,text=black,sloped ]
\tikzstyle{line6v12} = [draw,line width=1pt, color=black,text=black,sloped ]
\tikzstyle{line6v13} = [draw,line width=1pt, color=black,text=black,sloped ]
\tikzstyle{line7v8} = [draw,line width=1pt, color=black,text=black,sloped ]
\tikzstyle{line7v9} = [draw,line width=1pt, color=black,text=black,sloped ]
\tikzstyle{line9v10} = [draw,line width=1pt, color=black,text=black,sloped ]
\tikzstyle{line9v14} = [draw,line width=1pt, color=black,text=black,sloped ]
\tikzstyle{line10v11} = [draw,line width=1pt, color=black,text=black,sloped ]
\tikzstyle{line12v13} = [draw,line width=1pt, color=black,text=black,sloped ]
\tikzstyle{line13v14} = [draw,line width=1pt, color=black,text=black,sloped ]

\node [below=10pt of anchor4v2] (anchor4v2part) {};
\renewcommand{\=}{\hspace{-2pt} = \hspace{-2pt}}

\draw [line1v2] (anchor1v2) -- node [below, yshift=-.1cm] {} (anchor2v1);
%\tiny $1.0 \quad 1.5$

\draw [line1v5] (anchor5v1) --++ (0,-30pt) --++ (-30pt,0) -- node [below,yshift=-.1cm] {} (anchor1v5) ;
%\tiny $1.0 \quad 1.58$
\draw [line1v6] (anchor1v5) -- node [below,yshift=-.1cm] {} ++ (6,.8) --++ (0,1.3) ;
%\tiny $1.0 \quad 1.69$
\draw [line1v7] (anchor1v7) --  ++ (50pt,-38pt) node [below,yshift=-.1cm,xshift= .5cm] {} -| (anchor7v1) ;
%\tiny $1.0 \quad 1.49$
\draw [line1v12] (anchor1) -- node [below,yshift=-.1cm] {} (anchor12);
%\tiny $1.0 \quad 1.4$
\draw [line2v3] (anchor2v3) --++ (0,20pt)-- node [below,yshift=-.1cm] {\tiny $36.0$}   ++(3.2,0) --++ (0,-20pt) ;
%\tiny $.76 \quad .52$
\draw [line2v4] (anchor2v4) --++ (0,20pt) -- node [below,yshift=-.1cm] {\tiny $65.0$}  (anchor4v2part) -- (anchor4v2) ;
%\tiny $.62 \quad .7$
\draw [line2v5] (anchor2v6) -- node [below,yshift=-.1cm] {\tiny $50.0$}   ++ (0,3.7);
%\tiny $.0 \quad .99$
\draw [line3v4] (anchor3v4) -- node [below,xshift=-0.2cm,yshift=-.1cm] {\tiny $65.0$}   ++(0,3.7);
%\tiny $.14 \quad .08$
\draw [line4v5] (anchor4v5) --++ (0,-20pt) -- node [below,yshift=-.1cm] {\tiny $45.0$}  ++(-4,0) --++ (0,20pt);
%\tiny $.78 \quad .02$
\draw [line4v7] (anchor4v7) --  node [below,yshift=-.1cm] {\tiny $32.0$} ++ (0,3.3) ;
%\tiny $.72 \quad .33$
\draw [line4v9] (anchor4v5) -- node [above,yshift=0.1 cm,xshift=0.2cm,] {\tiny $32.0$} ++ (0,5.5) ;
%\tiny $.42 \quad \quad \quad .79$
\draw [line5v6] (anchor5v2) -- node [below,yshift=-.1cm,xshift=-0.3cm,] {\tiny $45.0$}  ++(0,5);
%\tiny $.19 \quad \quad \quad .85$
\draw [line6v11] (anchor6v11) -- node [above,yshift=.1cm] {\tiny $18.0$} ++ (0,2.5 ) ;
%\tiny $.92 \quad .25$
\draw [line6v12] (anchor6v1) -- node [below,yshift=-.1cm] {\tiny $32.0$}  (anchor12);
%\tiny $.47 \quad .03$
\draw [line6v13] (anchor6v13) -- node [above,yshift=.1cm] {\tiny $32.0$} ++ (0,3.9) ;
%\tiny $.8 \quad \quad \quad .66$
\draw [line7v8] (anchor7v8) --++(0,35pt) -- node [below,yshift=-.1cm] {\tiny $32.0$}   ++(1.4,0);
%\tiny $.78 \quad .13$
\draw [line7v9] (anchor7v1) -- node [below,yshift=-.1cm] {\tiny $32.0$} ++ (0,2.2) ;
%\tiny $.24 \quad .38$
\draw [line9v10] (anchor9v4) -- node [above,yshift=.1cm] {\tiny $32.0$} ++ (0,2.1) ;
%\tiny $.48 \quad .97$
\draw [line9v14] (anchor9v14) -- node [below,xshift=0.2cm,yshift=-.1cm] {\tiny $12.0$} ++ (0,3.5) ;
%\tiny $.91 \quad .02$
\draw [line10v11] (anchor10v11) --++(0,20pt) -- node [below,yshift=-.1cm] {\tiny $12.0$}  ++(-3.5,0)  --++(0,-20pt) ;
%\tiny $.46 \quad .32$
\draw [line12v13] (anchor12v13) -- node [above,yshift=.1cm] {\tiny $12.0$}  (anchor13v12) ;
%\tiny $.35 \quad .76$
\draw [line13v14] (anchor13v14) --++(0,20pt) -- node [above,yshift=.1cm] {\tiny $12.0$}   ++ (3.8,0) --++(0,-20pt);
%\tiny $.51 \quad .52$

%%%% Nodes Description Param
\def\nd{0.3}% node distance to line
\def \idn{0.15} % inter node distance
\node [above=0.5cm of G0] (tg1) {\tiny Grid connection}; 
%\node [ below=0.8cm of g2,rotate=90] (tg4) {\tiny \begin{tabular}{c}
%     wind, solar \\
%     $\Delta G\=0.40$
%\end{tabular}  }; 
%\node [ below=0.5cm of g4,rotate=90] (tg4) {\tiny gas}; 
\node [ below=0.8cm of g6,rotate=90] (tg4) {\tiny \begin{tabular}{c}
     wind \\
     $\Delta G\=4.99$
\end{tabular}  }; 
\node [ below=0.5cm of g7,rotate=90] (tg7) {\tiny coal}; 
\node [ right=0.2cm of g10] (tg10) {\tiny \begin{tabular}{c}
     \kern-2em wind \\
     $\mkern-28mu \Delta G\=7.5$
\end{tabular}  }; 
\node [ right=0.5cm of g11] (tg11) {\tiny $\mkern-36mu D^{\ast}_7 = 4.0$}; 
\node [ below=0.5cm of g15,rotate=90] (tg15) {\tiny \begin{tabular}{c}
      solar \\
     $\Delta G\=15.51$
\end{tabular}  }; 

%%%%% DEMAND
\node [ below=0.8cm of c2,rotate=90] (tc2) {\tiny $D^*_{2} \=6.57$} ;
\node [ below=0.8cm of c3,rotate=90] (tc3) {\tiny $D^*_{3} \=12.55$} ;
\node [right= 0.2cm of c4] (tc4) {\tiny $D^*_{4} \=8.75$} ;
\node [left= 0.2cm of c5] (tc5) {\tiny $D^*_{5} \=6.37$} ;
\node [ below=0.55cm of c6,xshift=-0.1cm,rotate=90] (tc6) {\tiny $D^*_{6} \=4.33$} ;
\node [right= 0.4cm of c9] (tc9) {\tiny $D^*_{9} \=9.42$} ;
\node [ below=0.8cm of c10,rotate=90] (tc10) {\tiny $D^*_{10} \=3.27$} ;
\node [below right= 0.1cm of c11] (tc11) {\tiny $D^*_{11} \=4.51$} ;
\node [above= 0.8cm of c12,rotate=90] (tc12) {\tiny $D^*_{12} \=3.26$} ;
\node [above= 0.8cm of c13,rotate=90] (tc13) {\tiny $D^*_{13} \=5.63$} ;
\node [above= 0.8cm of c14,rotate=90] (tc14) {\tiny $D^*_{14} \=5.28$} ;

%%legend

\end{tikzpicture}

\caption{\small{IEEE 14-bus network system }}
\label{fig:IEEE_network}
\end{center}
\end{figure}
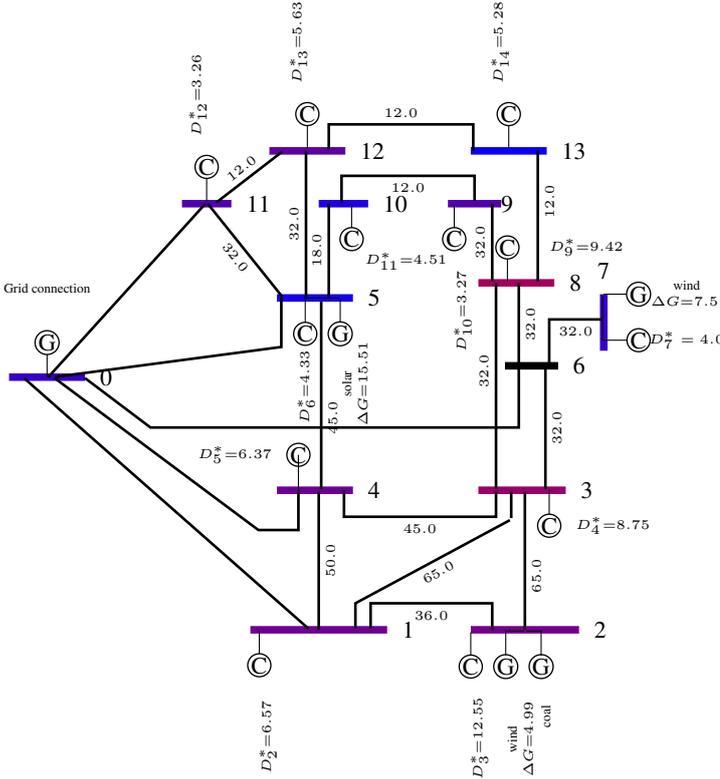

We measure the impact of our mechanism on the cost of the agents, measured through the agent's utility gap $\mathbb{E} \Big[\Pi^{\ast}_n - \Pi_n(\tilde{y}_n, \bm{\tilde{y}}_{-n}) \Big]$ for each agent $n$.  In Figure \ref{fig:A_n}, we plot the utility gap as a function of $A_n$ for all the agents. We observe that the nodes 3 and 8 decrease their costs the most among all the prosumers and nodes 9 and 11 have, on the contrary, increasing costs. From Table \ref{table 1} it can be seen that node 3 and 8 have the minimal flexible demand coefficients: $\tilde{a}_3 = 0.57$ and $\tilde{a}_8 = 0.31$. Similarly, nodes 9 and 11 have the biggest flexible demand coefficients: $\tilde{a}_9 = 4.36$ and $\tilde{a}_{11} = 5.16$. The cost of the demand flexibility affects the utility of the agents, i.e. small cost allows them to adjust their demand such that they can decrease their costs, while deviating from their true values.

For the graphs shown below, we set $\alpha_n = 3.0$ when we plot the dependance w.r.t. $A_n$, and $A_n = 10.0$ when we plot the dependance w.r.t. $\alpha_n$. For this choice of parameters, the color of the nodes in Figure \ref{fig:IEEE_network} shows the {\it privacy price} $\underline{\beta}_n, \overline{\beta}_n$ [\$/MWh] from Proposition \ref{proposition: privacy price} in each $n \in \mathcal{N}$. Light \textcolor{blue}{blue} denotes the lowest privacy price ($1.129.10^{-3}$ [\$/MWh]) and dark \textcolor{violet}{violet} denotes the highest ($2.827.10^{-2}$[\$/MWh]).

Figure \ref{fig:alpha} represents the dependence of the plot of the utility gap on the parameter $\alpha_n$ of the agents. It is shown, that when the maximal bound on the distance is low, the agents expectedly deviate from their costs $\Pi^{\ast}_n$. As soon as $\alpha_n$ increases, thus providing more possibility to deviate, agents tend to show the similar behavior as on the plot with respect to $A_n$: nodes 3 and 8 gain the most and nodes 9 and 11 have the increasing costs.
\begin{figure}[ht]
\centering
\begin{minipage}{.45\linewidth}
  \includegraphics[width=\linewidth]{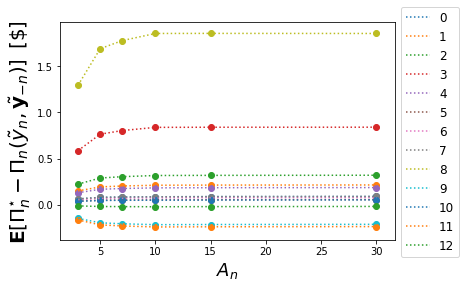}
  \caption{Utility gap wrt. $A_n$}
  \label{fig:A_n}
\end{minipage}
\hspace{.03\linewidth}
\begin{minipage}{.45\linewidth}
  \includegraphics[width=\linewidth]{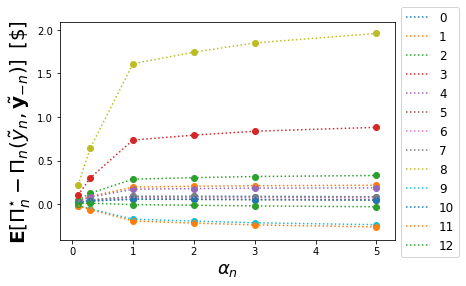}
  \caption{Utility gap wrt. $\alpha_n$}
  \label{fig:alpha}
\end{minipage}
\end{figure}

Figures \ref{fig:A_n social cost} and \ref{fig:alpha social cost} depict the dependance of the social cost of the system w.r.t. $A_n$ and $\alpha_n$ respectively. We compare three instances: peer-to-peer communication mechanism, fully coordinated communication mechanism and the social cost evaluated in the truthful reports. Note, that the latter one provides the same cost when Proposition \ref{proposition: zero total net import} holds. 

It can be seen that increase in $A_n$ affects the peer-to-peer communication the most, which is caused by the decrease of the privacy price induced by the noise in the agents' reports (recall, that the variance is given by $V_n = \frac{2B^4}{A_n B^2_n}(\overline{\beta}_n + \underline{\beta}_n)^2$), thus allowing them to compute their decision variables more precisely. Clearly, it affects the centralized communication mechanism less. On the other hand, increase of the $\alpha_n$ affects the centralized communication mechanism the most, as it allows the local MO to find an optimal solution for each agent in the system, thus leading to the biggest decrease in the costs. 
\begin{figure}
\centering
\begin{minipage}{.45\linewidth}
  \includegraphics[width=\linewidth]{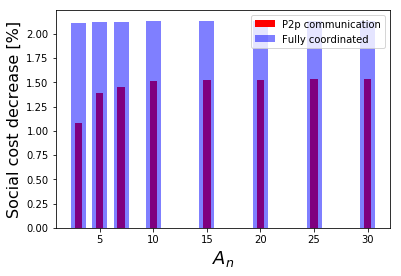}
  \caption{Social cost decrease wrt. $A_n$}
  \label{fig:A_n social cost}
\end{minipage}
\hspace{.05\linewidth}
\begin{minipage}{.45\linewidth}
  \includegraphics[width=\linewidth]{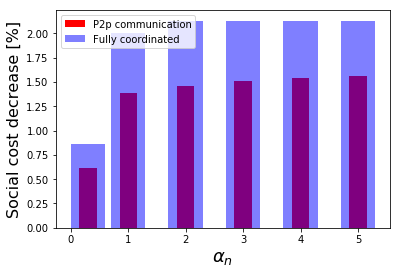}
  \caption{Social cost decrease wrt. $\alpha_n$}
  \label{fig:alpha social cost}
\end{minipage}
\end{figure}

The results above are shown in the homogeneous $A_n$, $\alpha_n$ and $c_{nm}$ case. Heterogeneity in the distance parameters $\alpha_n$ and $A_n$ does not affect the behavior of the agents in the system described above. Nevertheless, setting parameters $\alpha_n$ to be small for those who deviate the most (e.g. nodes 3 and 8) can bound their influence on the sum $\sum_n \tilde{y}_n$, thus, bounding the deviation from the $\sum_n y_n$. 

In the case of heterogeneous differentiation prices $c_{nm}$ for $n,m \not=0$, we compute the trading costs of the agents, using the expressions given in Proposition \ref{proposition: bilateral cost p2p}. Numerical experiments show the same behavior for all the agents in the system, while distinguishing the node 0: in this setting it decreases its cost the most. An interesting research direction would be to adapt the penalty for the prosumers in order to account for this behavior.

\section{Conclusion}
In our work we considered a peer-to-peer electricity market, in which agents have private information. The problem is modeled as a noncooperative communication game, which takes the form of a GNEP, where the agents determine their randomized reports to share with the other market players, while anticipating the form of the peer-to-peer market equilibrium. Agents decide on the deterministic and random parts of the report, such that the (a) the distance between the deterministic part of the report and the truthful private information is bounded (b) expectation of the privacy loss random variable is bounded. This allows them to act strategically on the values of the deterministic part and to choose the random noise included in their reports. We characterized the equilibrium of the problem and proved the uniqueness of the Variational Equilibria. We provided a closed form expression for the privacy price. The theoretical results are illustrated on the 14-bus IEEE network, using the stochastic gradient descent algorithm. We show the impact of the privacy preservation caused by inclusion of random noise and deterministic deviation from agents’ true values.

Since our problem has a potential form under mild assumptions, as the next step, we will focus on the development of the distributed learning algorithm for the stochastic NE of the Generalized Potential Game. Another interesting research direction would be to consider the decentralized communication mechanism, where agents do not have the ability to anticipate the form of the uniform market price.

%\newpage

\newpage
\section{Appendix}
\subsection{Proof of Lemma \ref{lemma: strong monotonicity}}
\begin{proof}
    First, note that for heterogeneous $c_{nm}$, operator $\bm{F}(\bm{\hat{y}}, \bm{V})$ writes as follows for nodes $i \not= 0$:
    \begin{multline}\label{eq: pseudo-differential heterogeneous}
        \bm{F}(\bm{\hat{y}}, \bm{V}) = col \Bigg\{ \Bigg( \frac{B_i}{B^2} \left(\sum_{m} \hat{y}_m + \sum_{m} \frac{b_m}{a_m}\right) +\\ - c_{i0}\frac{B_i}{B} - \frac{p^0}{N},  \frac{V_i B_i}{B^2} \Bigg)^{N-1}_{i=1} \Bigg\},
    \end{multline}
    and for node $0$ we can write it as follows:
    \begin{multline}\label{eq: pseudo-differential heterogeneous zero}
        \bm{F}(\bm{\hat{y}}, \bm{V}) = \Bigg( \frac{B_0}{B^2} \left(\sum_{m} \hat{y}_m + \sum_{m} \frac{b_m}{a_m}\right) +\\ + \sum_{m\not=0}c_{0m} \frac{B_m}{B} - \frac{p^0}{N},  \frac{V_0 B_0}{B^2} \Bigg).
    \end{multline}
    We want to prove that the operator $\bm{F}(\bm{\hat{y}}, \bm{V})$ defined in \eqref{eq: pseudo-differential}  or in \eqref{eq: pseudo-differential heterogeneous} and \eqref{eq: pseudo-differential heterogeneous zero}
    is strongly monotone.
    
    We denote vector $z$ to be $z:= (\hat{y}_0, V_0, \dots, \hat{y}_{N-1}, V_{N-1})$. We need to investigate whether $\nabla_{z} \bm{F}(\bm{\hat{y}}, \bm{V})$ is positive-definite. Denote $\bm{F}(\bm{\hat{y}}, \bm{V})_{i1} := \frac{B_i}{B^2} \left(\sum_{n} \hat{y}_m + \sum_{n} \frac{b_n}{a_n}\right) -c\frac{B_i}{B} - \frac{p^0}{N}$ for the homogeneous $c_{nm}$ case, and $\bm{F}(\bm{\hat{y}}, \bm{V})_{i2} = \frac{V_i B_i}{B^2} $. Similarly, for the heterogeneous $c_{nm}$ case, denote $\bm{F}(\bm{\hat{y}}, \bm{V})_{i1} := \frac{B_i}{B^2} \left(\sum_{n} \hat{y}_m + \sum_{n} \frac{b_n}{a_n}\right) -c_{i0}\frac{B_i}{B} - \frac{p^0}{N}$ for $i\not=0$ and $\bm{F}(\bm{\hat{y}}, \bm{V})_{01} := \frac{B_i}{B^2} \left(\sum_{n} \hat{y}_m + \sum_{n} \frac{b_n}{a_n}\right) + \sum_{m\not=0}c_{0m} \frac{B_m}{B} - \frac{p^0}{N}$. Thus we have that $\frac{\partial \bm{F}(\bm{\hat{y}}, \bm{V})_{i1}}{\partial \hat{y}_j} = \frac{B_i}{B^2}, \, \forall j \in \mathcal{N}$ and $\frac{\partial \bm{F}(\bm{\hat{y}}, \bm{V})_{i2}}{\partial V_i} = \frac{B_i}{B^2}$. All other partial derivatives are 0. Thus $\nabla_{z} \bm{F}(\bm{\hat{y}}, \bm{V})$ is a matrix defined with its entries to be 
    \begin{equation*}
        \nabla_{z} \bm{F}(\bm{\hat{y}}, \bm{V})_{ij} = \left\{
        \begin{aligned}
            &\frac{B_{\frac{i+2}{2}}}{B^2}  \text{ if $i,j$ are even } \\
            & \frac{B_{\frac{i+1}{2}}}{B^2} \text{ if $i,j$ are odd  and $i=j$} \\
            & 0 \text{ otherwise }
        \end{aligned}
        \right.
    \end{equation*}
    Symmetric matrix $A$ is positive definite on compact if its quadratic form is positive: $x^{\top}Ax > 0,\,\, \forall x \in \mathbb{R}^n \setminus 0$. Note, that non-symmetric matrix $A$ is positive definite iff symmetric matrix $\frac{1}{2}\left( A + A^{\top} \right)$ is. In our case the quadratic form is given by the following expression:
    \begin{multline*}
        \frac{1}{2} \bm{z}^{\top} \left( \nabla_{z} F(\bm{\hat{y}}, \bm{V}) + \nabla_{z} F(\bm{\hat{y}}, \bm{V})^{\top} \right)\bm{z} =\\
        = \frac {1}{B^2}\sum_{i =1 }^N B_i \hat{y}^2_i + \frac{1}{2B^2}\sum_{i=1}^N\sum_{j \not= i} (B_i + B_j) \hat{y}_i \hat{y}_j + \frac{1}{B^2}\sum_{i = 1}^N B_iV_i\\
        \geq \frac{1}{B^2}\sum_{i =1 }^N B_i \hat{y}^2_i + \frac{1}{B^2}\sum_{i=1}^N\sum_{j \not= i} \sqrt{B_i B_j} \hat{y}_i \hat{y}_j + \frac{1}{B^2}\sum_{i = 1}^N B_iV_i\\
        = \frac{1}{B^2}(\sum_{i=1}^N \sqrt{B_i}\hat{y}_i)^2 + \frac{1}{B^2}\sum_{i = 1}^N B_iV_i,
    \end{multline*}
    which is positive for all $(\hat{y}_i, V_i)_i$ in the feasible region. 
\end{proof}

\subsection{Proof of Proposition \ref{proposition: GPG}}
\begin{proof}
    The notion of Generalized Potential Game is widely used in the literature, see e.g. \cite{facchinei}. To verify that our game is indeed the potential game, consider function $\mathcal{P}(\bm{\hat{y}})$, defined as follows:
    \begin{equation*}
        \begin{aligned}
            \mathcal{P}(\bm{\hat{y}}, \bm{V}) &= \sum_{i =1}^N \Big[  \frac{H\hat{y}_i}{B}(\frac{1}{2}\sum_n \hat{y}_n + \sum_n \frac{b_n}{a_n}) -\frac{p^0 \hat{y}_i}{N}-\frac{cH\hat{y}_i}{B} \\
            &+ \frac{H}{B} V_i \Big]
        \end{aligned}
    \end{equation*}
    We can check that it is a potential function. Indeed, for all $\hat{y}_{-n}$, and for all admissible $x_n, z_n, x'_n, z'_n$:
    \begin{multline}
            \Pi_n(x_n,\hat{\bm{y}}_{-n}, x'_n, \bm{V}_{-n}) - \Pi_n(z_n,\hat{\bm{y}}_{-n},z'_n, \bm{V}_{-n}) = \\ 
            = \frac{H}{2B} \Big[ (x_n - z_n)(x_n+z_n+2\sum_{m \not= n} \hat{y}_m + 2\sum_{k \in \mathcal{N}}\frac{b_k}{a_k}) \Big] \\
            - \frac{(x_n - z_n)cH}{B}  - \frac{p^0}{N}(x_n - z_n)+ \\
            +\frac{H}{B} (x'_n - z'_n)=\\
            = P(x_n,\hat{\bm{y}}_{-n}, x'_n, \bm{V}_{-n}) - P(z_n,\hat{\bm{y}}_{-n},z'_n, \bm{V}_{-n})
    \end{multline}
\end{proof}
\end{document}